\definecolor{teal}{HTML}{008080}
\newtheorem{proposition}{Proposition}
\newtheorem{theorem}{Theorem}
\newtheorem{definition}{Definition}
\tikzset{>=latex}
\newcommand{\descr}[1]{\vspace{0.2cm} \noindent \textbf{#1}}
\begin{document}

\title{Efficient Fault-Tolerant Quantum Protocol for Differential Privacy in the Shuffle Model
}

% \author{Hassan Asghar}
% \affiliation{%
%   \institution{School of Computing, Macquarie University}
%   \city{Sydney}
% %  \state{Ohio}
%   \country{Australia}}
% \email{hassan.asghar@mq.edu.au}

% \author{Arghya Mukherjee}
% \affiliation{%
%   \institution{School of Mathematical and Physical Sciences, Macquarie University}
%   \city{Sydney}
% %  \state{Ohio}
%   \country{Australia}}
% \email{arghya.mukherjee@students.mq.edu.au}

% \author{Gavin Brennen}
% \affiliation{%
%   \institution{School of Mathematical and Physical Sciences, Macquarie University}
%   \city{Sydney}
% %  \state{Ohio}
%   \country{Australia}}
% \email{gavin.brennen@mq.edu.au}

\author[1]{\fnm{Hassan} \sur{Asghar}}\email{hassan.asghar@mq.edu.au}

\author[2]{\fnm{Arghya } \sur{Mukherjee}}\email{arghya.mukherjee@students.mq.edu.au}

\author[2]{\fnm{Gavin K.} \sur{Brennen}}\email{gavin.brennen@mq.edu.au}

\affil*[1]{\orgdiv{School of Computing 
 }, \orgname{Macquarie
University},  \city{Sydney},  \state{NSW}, \country{Australia}}

\affil*[2]{\orgdiv{School of Mathematical and Physical
Sciences
 }, \orgname{Macquarie
University},  \city{Sydney},  \state{NSW}, \country{Australia}}

%\date{March 2023}

\abstract{
We present a quantum protocol which securely and implicitly implements a random shuffle to realize differential privacy in the shuffle model. The shuffle model of differential privacy amplifies privacy achievable via local differential privacy by randomly permuting the tuple of outcomes from data contributors. In practice, one needs to address how this shuffle is implemented. Examples include implementing the shuffle via mix-networks, or shuffling via a trusted third-party. These implementation specific issues raise non-trivial computational and trust requirements in a classical system. We propose a quantum version of the protocol using entanglement of quantum states and show that the shuffle can be implemented without these extra requirements. Our protocol implements $\kappa$-ary randomized response, for any value of $\kappa \ge 2$, and furthermore, can be efficiently implemented using fault-tolerant computation.}

\keywords{Quantum Computing, Differential Privacy, Quantum Cryptography}

\maketitle

\section{Introduction}
We consider the scenario of gathering data from remotely located individuals (clients), aggregating it and then releasing it in such a way that anyone receiving the processed data cannot learn anything specific about an individual, thus guaranteeing privacy of an individual. A concrete way of accomplishing this is via the notion of differential privacy~\cite{dwork2006calibrating}. More specifically, a central server, sometimes called an aggregator, receives individual datum from remote clients, applies an aggregation function (e.g., average), runs a differentially private mechanism on this output and then releases the result. Most commonly, the mechanism adds noise from a certain distribution given the desired privacy level, specified through the privacy parameter $\epsilon$, and the sensitivity of the function, which bounds how much the function can change if a single data item is to be added or removed. Intuitively, this protects privacy since the addition of noise masks any contribution from a specific individual. In most practical cases, individuals may not trust the server with their data, and hence could use the notion of local differential privacy (LDP)~\cite{kasiviswanathan2011ldp}. In this model, each client applies the differentially private mechanism directly on his/her input, usually referred to as a randomizer, and sends the perturbed input to the server. The server applies the aggregation function, and can optionally de-bias the result (which remains private due to the post-processing property of differential privacy~\cite{dwork2014dp-book}). More recently, another model known as the shuffle model has been proposed~\cite{bittau2017prochlo}. In this model, instead of sending the locally randomized inputs directly to the server, they are first randomly shuffled, after which the server is handed over all shuffled values at once. The idea is that the shuffling step, if performed securely, amplifies privacy, meaning that for the same privacy level, i.e., $\epsilon$, one gets better utility using the shuffle mechanism~\cite{privacy-blanket-shuffle}. The reason is intuitive: the server does not know which input belongs to which individual in the shuffle model.

In practice, one needs to determine how this shuffle mechanism is implemented. There are various methods. For instance, this could be done by a trusted third party~\cite{bittau2017prochlo} or via a mix network~\cite{dp-shuffle}, both residing between the clients and the server.\footnote{A slightly tangential approach is secure multiparty computation (SMC) through which clients can aggregate their own data and apply the differentially private mechanism using SMC techinques, thus simulating the central model~\cite{dwork2006ourdataourselves}. However, this solution incurs significant computation, bandwidth and liveness requirements on the clients~\cite{dp-shuffle}.} Inevitably, this adds further overhead to the protocol, as one needs to ensure that the shuffle is securely implemented. In this paper, we look at differential privacy in the shuffle model in the quantum world. More precisely, we assume that the remote clients as well as the server are equipped with quantum devices and connected via classical and quantum communication channels, and address the problem of differential privacy in the shuffle model. As we shall see, a characteristic of the application of the shuffle model in the quantum setting is that shuffling pretty much comes fror ``free,'' as we can utilize the \emph{entanglement} property of quantum states. We focus on the case where each device has classical information, which is then made differentially private by applying a local randomizer (which can be implemented classically or via a quantum circuit). Each client then performs a local measurement on the entangled state, and sends its result to the server via a classical channel. The proposed protocol implicitly implements the shuffle model mechanism where the local differential privacy algorithm is the $\kappa$-ary randomized response mechanism from~\cite{privacy-blanket-shuffle}. Our protocol, which is based on anonymous broadcasting, does not need a full purpose quantum computer to implement, since it only requires \emph{Clifford gates} (see Section~\ref{sec:fault-tolerance}). This is important because the overhead for quantum error correction to make the protocol robust in the presence of errors, is much less than for general purpose quantum computing. Furthermore, this allows us to easily implement our protocol in a fault-tolerant way which makes it suitable for the current generation of noisy intermediate scale quantum (NISQ) computers. To aid readers, our paper is mostly self-contained with almost all mathematical tools used for the quantum components in our construction introduced in this paper.

In what follows, we describe the threat model, and give a brief introduction to differential privacy and quantum computation in Section~\ref{sec:background}. In Section~\ref{sec:protocol} we describe our proposed protocol with proofs of correctness and security, and highlighting efficiency. Section~\ref{sec:circuits} describes all the quantum circuits used in our protocol and how they correctly compute the required outcome. We discuss fault-tolerant implementation of our protocol in Section~\ref{sec:fault-tolerance}. In Section~\ref{sec:physical} we discuss recent advances towards realizing the underlying qudit system used in our protocol. We discuss related work in Section~\ref{sec:rw} and present concluding remarks in Section~\ref{sec:conclude}. 
% Our main contributions are as follows.

% \begin{itemize}
%     \item First quantum protocol for differential privacy in the shuffle model
%     \item Our protocol is fault-tolerant unlike many secure quantum communication protocols
%     \item We provide a simulation 
% \end{itemize}

\section{Preliminaries and Background}
\label{sec:background}

\subsection{The Setting and Threat Model}
Our target setting is depicted in Figure~\ref{fig:mixnet-shuffle}. There are $n$ individuals, called clients, each with $x_i$. Each input is passed through a local randomizer $\mathcal{R}$ which outputs $y_i$ for input $x_i$. All outputs $y_i$ are shuffled, before sending them to the server. The server applies a function $f$ on the input. In the figure the shuffle model is depicted as being implemented as a mixnet. These notations are explained in more detail in the next section. We assume that the clients as well as the server are honest-but-curious and non-colluding. 

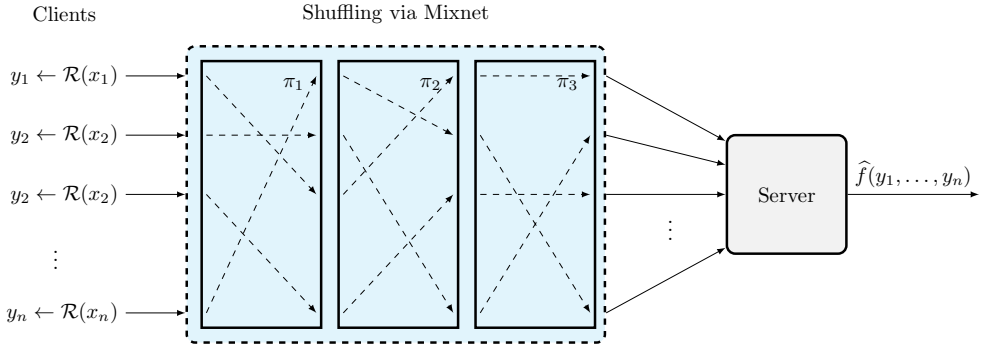
\begin{figure*}
    \centering
\resizebox{\columnwidth}{!}{
\begin{tikzpicture}
[
roundnode/.style={circle, draw=green!60, fill=green!5, very thick, minimum size=7mm},
squarednode/.style={rectangle, draw=red!60, fill=red!5, very thick, minimum size=5mm},
]
%Nodes
\node[minimum width = 7cm, minimum height = 5cm, draw=black, fill=cyan!10, dashed, rounded corners, very thick] 		(shuffler) 				{}; 

\node[] 		(shufflerlabel) 				[above =0.25cm of shuffler.north]{Shuffling via Mixnet}; 

\node[] 							(party1) 			[yshift=2cm, left=1cm of shuffler.west] {$y_1 \leftarrow \mathcal{R}(x_1)$};

\node[] 							(party2) 			[yshift=1cm, left=1cm of shuffler.west] {$y_2 \leftarrow \mathcal{R}(x_2)$};

\node[] 							(party3) 			[yshift=0cm, left=1cm of shuffler.west] {$y_2 \leftarrow \mathcal{R}(x_2)$};

\node[] 							(dots) 			[yshift=-1cm, left=2cm of shuffler.west] {$\vdots$};

\node[] 							(partyn) 			[yshift=-2cm, left=1cm of shuffler.west] {$y_n \leftarrow \mathcal{R}(x_n)$};

\node[] 		(partylabel) 				[above =0.5cm of party1.north]{Clients};

% mix net

\node[minimum width = 2cm, minimum height = 4.5cm, right=0.25cm of shuffler.west, draw=black, very thick] 		(mixnet1) 				{}; 

\node[] 		(mixnetlabel1) 				[below left =0.25cm of mixnet1.north east]{$\pi_1$}; 

\node[minimum width = 2cm, minimum height = 4.5cm, right=0.25cm of mixnet1.east, draw=black, very thick] 		(mixnet2) 				{}; 

\node[] 		(mixnetlabel2) 				[below left =0.25cm of mixnet2.north east]{$\pi_2$}; 

\node[minimum width = 2cm, minimum height = 4.5cm, right=0.25cm of mixnet2.east, draw=black, very thick] 		(mixnet3) 				{}; 

\node[] 		(mixnetlabel3) 				[below left =0.25cm of mixnet3.north east]{$\pi_3$}; 

% Arrows for Pi 1

\node[] 							(pi11start) 			[yshift=2cm, right=0.1cm of mixnet1.west] {};

\node[] 							(pi11end) 			[yshift=0cm, left=0.1 cm of mixnet1.east] {};

\draw[->, dashed] (pi11start.west) -- (pi11end.east);

\node[] 							(pi12start) 			[yshift=-2cm, right=0.1cm of mixnet1.west] {};

\node[] 							(pi12end) 			[yshift=2cm, left=0.1 cm of mixnet1.east] {};

\draw[->, dashed] (pi12start.west) -- (pi12end.east);

\node[] 							(pi13start) 			[yshift=0cm, right=0.1cm of mixnet1.west] {};

\node[] 							(pi13end) 			[yshift=-2cm, left=0.1 cm of mixnet1.east] {};

\draw[->, dashed] (pi13start.west) -- (pi13end.east);

\node[] 							(pi14start) 			[yshift=1cm, right=0.1cm of mixnet1.west] {};

\node[] 							(pi14end) 			[yshift=1cm, left=0.1 cm of mixnet1.east] {};

\draw[->, dashed] (pi14start.west) -- (pi14end.east);

% Arrows for Pi 2

\node[] 							(pi21start) 			[yshift=2cm, right=0.1cm of mixnet2.west] {};

\node[] 							(pi21end) 			[yshift=1cm, left=0.1 cm of mixnet2.east] {};

\draw[->, dashed] (pi21start.west) -- (pi21end.east);

\node[] 							(pi22start) 			[yshift=0cm, right=0.1cm of mixnet2.west] {};

\node[] 							(pi22end) 			[yshift=2cm, left=0.1 cm of mixnet2.east] {};

\draw[->, dashed] (pi22start.west) -- (pi22end.east);

\node[] 							(pi23start) 			[yshift=-2cm, right=0.1cm of mixnet2.west] {};

\node[] 							(pi23end) 			[yshift=0cm, left=0.1 cm of mixnet2.east] {};

\draw[->, dashed] (pi23start.west) -- (pi23end.east);

\node[] 							(pi24start) 			[yshift=1cm, right=0.1cm of mixnet2.west] {};

\node[] 							(pi24end) 			[yshift=-2cm, left=0.1 cm of mixnet2.east] {};

\draw[->, dashed] (pi24start.west) -- (pi24end.east);

% Arrows for Pi 3

\node[] 							(pi31start) 			[yshift=2cm, right=0.1cm of mixnet3.west] {};

\node[] 							(pi31end) 			[yshift=2cm, left=0.1 cm of mixnet3.east] {};

\draw[->, dashed] (pi31start.west) -- (pi31end.east);

\node[] 							(pi32start) 			[yshift=-2cm, right=0.1cm of mixnet3.west] {};

\node[] 							(pi32end) 			[yshift=1cm, left=0.1 cm of mixnet3.east] {};

\draw[->, dashed] (pi32start.west) -- (pi32end.east);

\node[] 							(pi33start) 			[yshift=0cm, right=0.1cm of mixnet3.west] {};

\node[] 							(pi33end) 			[yshift=0cm, left=0.1 cm of mixnet3.east] {};

\draw[->, dashed] (pi33start.west) -- (pi33end.east);

\node[] 							(pi34start) 			[yshift=1cm, right=0.1cm of mixnet3.west] {};

\node[] 							(pi34end) 			[yshift=-2cm, left=0.1 cm of mixnet3.east] {};

\draw[->, dashed] (pi34start.west) -- (pi34end.east);

% Aggregator

\node[minimum width = 2cm, minimum height = 2cm, right=2cm of shuffler.east, draw=black, very thick, rounded corners, fill=gray!10] 		(agg) 				{Server}; 

\draw[->] ([yshift=2cm]shuffler.east) -- ([yshift=0.9cm]agg.west);

\draw[->] ([yshift=1cm]shuffler.east) -- ([yshift=0.5cm]agg.west);

\draw[->] ([yshift=0cm]shuffler.east) -- ([yshift=0cm]agg.west);

\node[] 							(dots) 			[yshift=-0.5cm, left=0.75cm of agg.west] {$\vdots$};

\draw[->] ([yshift=-2cm]shuffler.east) -- ([yshift=-0.9cm]agg.west);

\draw[->] (agg.east) -- node [above, pos = 0.5] {$\widehat{f}(y_1, \ldots, y_n)$} ([xshift=2.2cm]agg.east);
 
%Lines
\draw[->] (party1.east) -- node [above, pos = 0.5] {} ([yshift=2cm]shuffler.west);
\draw[->] (party2.east) -- node [below, pos = 0.5] {} ([yshift=1cm]shuffler.west);
\draw[->] (party3.east) -- node [below, pos = 0.5] {} ([yshift=0cm]shuffler.west);
\draw[->] (partyn.east) -- node [below, pos = 0.5] {} ([yshift=-2cm]shuffler.west);

\end{tikzpicture}
}
\caption{The shuffle model of differential privacy. Each client's input $x_i$ is locally randomized, before being shuffled. Shuffling in this case is implemented via a mix network. The server then combines the shuffled and locally randomized values to produce a differentially private estimate $\widehat{f}$ of the function $f$ of the original inputs $x_1, \ldots, x_n$.}
\label{fig:mixnet-shuffle}
\end{figure*}

\subsection{Differential Privacy and the Shuffle Model}
\label{subsec:dp}
We summarize the local, central and shuffle model of differential privacy based on the formulation in~\cite{privacy-blanket-shuffle}. Let $\mathcal{X}$ be a data domain. We denote datasets as tuples $D = (x_1, \ldots, x_n) \in \mathcal{X}^n$, where each $x_i \in \mathcal{X}$. Two datasets $D, D'$ are neighbours, denoted $D \sim D'$ if they differ in exactly one element. Let $\epsilon \ge 0$, and $\delta \in [0, 1]$. 

\begin{definition}[Differential Privacy]
\label{def:dp}
A randomized algorithm $\mathcal{M}: \mathcal{X}^n \rightarrow \mathcal{Y}$ is $(\epsilon, \delta)$-differentially private (DP) if for all pairs of neighbouring datasets $D, D'$ and for all subsets $S \subseteq \mathcal{Y}$, we have:
\[
\Pr[\mathcal{M}(D) \in S] \leq e^\epsilon \Pr[\mathcal{M}(D') \in S] + \delta.
\]
\end{definition}
The following are known results.
\begin{proposition}[Post-processing~\cite{dwork2006calibrating}]\label{prop:post-processing}
    If $\mathcal{M}$ is $(\epsilon,\delta)$-DP, then for any algorithm $\mathcal{M}'$, $\mathcal{M}' \circ \mathcal{M}$ is also $(\epsilon,\delta)$-DP.
\end{proposition}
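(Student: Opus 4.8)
The plan is to verify the defining inequality of Definition~\ref{def:dp} directly for the composed mechanism, reducing the general randomized case to the deterministic one by conditioning on the internal randomness of $\mathcal{M}'$ and then invoking the $(\epsilon,\delta)$-DP guarantee of $\mathcal{M}$. First I would fix an arbitrary neighbouring pair $D \sim D'$ and an arbitrary subset $T$ of the output space $\mathcal{Z}$ of $\mathcal{M}' \circ \mathcal{M}$, since establishing the inequality for all such $D, D', T$ is exactly what is required.

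I would begin with the \emph{deterministic} case, in which $\mathcal{M}'$ uses no internal randomness. Here the event $\{\mathcal{M}'(\mathcal{M}(D)) \in T\}$ coincides with $\{\mathcal{M}(D) \in S\}$ for the preimage $S = (\mathcal{M}')^{-1}(T) \subseteq \mathcal{Y}$. Applying the DP guarantee of $\mathcal{M}$ to this particular $S$ gives
\[
\Pr[\mathcal{M}'(\mathcal{M}(D)) \in T] = \Pr[\mathcal{M}(D) \in S] \le e^\epsilon \Pr[\mathcal{M}(D') \in S] + \delta = e^\epsilon \Pr[\mathcal{M}'(\mathcal{M}(D')) \in T] + \delta,
\]
which is precisely the desired inequality. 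The key observation is that post-processing can only coarsen the output distribution, and the DP definition already quantifies over \emph{all} subsets $S$, so the relevant preimage is automatically covered.

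For the \emph{randomized} case I would model $\mathcal{M}'$ as a deterministic map $\mathcal{M}'(\cdot\,; r)$ parametrized by its internal coins $r$, drawn independently of the randomness of $\mathcal{M}$. For each fixed $r$ set $S_r = \{ y : \mathcal{M}'(y; r) \in T \}$, so the deterministic argument above yields $\Pr[\mathcal{M}(D) \in S_r] \le e^\epsilon \Pr[\mathcal{M}(D') \in S_r] + \delta$. Taking expectation over $r$ and using independence to pull the expectation through each probability, I would conclude
\[
\Pr[\mathcal{M}'(\mathcal{M}(D)) \in T] = \mathbb{E}_r\big[ \Pr[\mathcal{M}(D) \in S_r] \big] \le e^\epsilon \, \mathbb{E}_r\big[ \Pr[\mathcal{M}(D') \in S_r] \big] + \delta = e^\epsilon \Pr[\mathcal{M}'(\mathcal{M}(D')) \in T] + \delta.
\]

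The one point needing care, and the main (though modest) obstacle, is the interchange of the expectation over $r$ with the probability over $\mathcal{M}$'s randomness in the last step. This is justified precisely because the two sources of randomness are independent, so the dependence of $S_r$ on $r$ cannot couple with the distribution of $\mathcal{M}(D)$; the constant additive term $\delta$ simply survives the averaging since $\mathbb{E}_r[\delta] = \delta$. As $D \sim D'$ and $T$ were arbitrary, this establishes $(\epsilon,\delta)$-DP of $\mathcal{M}' \circ \mathcal{M}$.
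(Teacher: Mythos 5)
Your proof is correct. The paper does not actually prove Proposition~\ref{prop:post-processing}; it states it as a known result and cites the literature, so there is no in-paper argument to compare against. Your two-step reduction --- handling deterministic $\mathcal{M}'$ via the preimage $S = (\mathcal{M}')^{-1}(T)$ and then extending to randomized $\mathcal{M}'$ by conditioning on its independent internal coins and averaging the pointwise inequality (under which the additive $\delta$ survives intact) --- is exactly the standard textbook proof of post-processing for $(\epsilon,\delta)$-DP, and every step, including the interchange of the expectation over $r$ with the probability over $\mathcal{M}$'s randomness, is justified as you state.
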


\begin{proposition}[Sequential composition~\cite{dwork2014dp-book}]\label{prop:composition}
    If $\mathcal{M}_1,\ldots,\mathcal{M}_t$ are $(\epsilon,\delta)$-DP, then the sequence of algorithms $\mathcal{M}' = \left(\mathcal{M}_1,\ldots,\mathcal{M}_t\right)$ is $(t\epsilon, t\delta)$-DP.
\end{proposition}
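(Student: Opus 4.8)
The plan is to prove the statement by induction on $t$, reducing the crux to composing two mechanisms: if $\mathcal{M}_1$ is $(\epsilon_1,\delta_1)$-DP and $\mathcal{M}_2$ is $(\epsilon_2,\delta_2)$-DP, then the pair $(\mathcal{M}_1,\mathcal{M}_2)$ is $(\epsilon_1+\epsilon_2,\,\delta_1+\delta_2)$-DP. Applying this $t-1$ times, each with $(\epsilon_i,\delta_i)=(\epsilon,\delta)$, yields the claimed $(t\epsilon, t\delta)$-DP for $\mathcal{M}'$. The base case $t=1$ is exactly the hypothesis, so the entire content lives in the two-mechanism step.

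For that step, fix neighbours $D \sim D'$ and an arbitrary event $S \subseteq \mathcal{Y}_1 \times \mathcal{Y}_2$. First I would condition on the output $y_1$ of $\mathcal{M}_1$ and write $\Pr[(\mathcal{M}_1,\mathcal{M}_2)(D) \in S]$ as an expectation over $y_1 \sim \mathcal{M}_1(D)$ of the section probability $g(y_1) := \Pr[\mathcal{M}_2(D) \in S_{y_1}]$, where $S_{y_1}$ is the $y_1$-slice of $S$; this conditioning is valid whether the two mechanisms are independent or $\mathcal{M}_2$ is chosen adaptively from $y_1$. Applying the $(\epsilon,\delta)$-DP guarantee of $\mathcal{M}_2$ to each slice bounds $g(y_1)$ by $e^{\epsilon}\Pr[\mathcal{M}_2(D')\in S_{y_1}]+\delta$. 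Since the resulting weight is a $[0,1]$-valued function of $y_1$, I would then apply the DP guarantee of $\mathcal{M}_1$ via a layer-cake (Fubini) decomposition $g = \int_0^1 \mathbf{1}[g>\tau]\,d\tau$, which lifts the set-based DP inequality to bounded functions. Multiplying the two $e^{\epsilon}$ factors produces the $e^{2\epsilon}$ term and the additive slacks accumulate.

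The main obstacle is the bookkeeping of these additive terms so that they sum to $2\delta$ rather than the $(1+e^{\epsilon})\delta$ produced by naive chaining, where the $\delta$ incurred at the inner step gets rescaled by the outer $e^{\epsilon}$. To obtain the tight $t\delta$, I would instead invoke the total-variation/surrogate characterization of approximate DP: $(\epsilon,\delta)$-indistinguishability of $\mathcal{M}_i(D)$ and $\mathcal{M}_i(D')$ is equivalent to the existence of surrogate output distributions, within total variation $\delta$ of the true ones, that are \emph{purely} $\epsilon$-indistinguishable. Pure $\epsilon$-DP composes exactly multiplicatively on product distributions, since the likelihood ratios multiply to give $e^{t\epsilon}$ pointwise, while replacing each true distribution by its surrogate costs at most $\delta$ in total variation; a union bound over the $t$ substitutions caps the total additive privacy failure at $t\delta$. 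As this is precisely the standard basic (sequential) composition theorem, I would otherwise simply defer to the proof in~\cite{dwork2014dp-book}.
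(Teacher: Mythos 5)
The paper states this proposition as a known result and cites~\cite{dwork2014dp-book} without supplying a proof of its own, so there is no in-paper argument to compare against. Your proposal is a correct sketch of the standard proof from that reference --- in particular, you rightly note that naive conditioning/chaining only yields an additive term of $(1+e^{\epsilon})\delta$ and that obtaining the tight $t\delta$ requires the surrogate-distribution (total-variation) characterization of $(\epsilon,\delta)$-indistinguishability --- so it matches the argument the paper implicitly defers to.
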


\begin{definition}[Local Differential Privacy]
A local randomizer is a randomized algorithm $\mathcal{R}: \mathcal{X} \rightarrow \mathcal{Y}$. We say that the local randomizer is $(\epsilon, \delta)$-locally differentially private (LDP) if for all $x, x' \in \mathcal{X}$ and for all $S \subseteq \mathcal{Y}$, we have:
\[
\Pr[\mathcal{R}(x) \in S] \leq e^\epsilon \Pr[\mathcal{R}(x') \in S] + \delta.
\]    
\end{definition}

If $\mathcal{R}$ is $(\epsilon, \delta)$-LDP, then the mechanism $\mathcal{M}: \mathcal{X}^n \rightarrow \mathcal{Y}^n$ defined as $\mathcal{M}(x_1, \ldots, x_n) = (\mathcal{R}(x_1), \ldots, \mathcal{R}(x_n))$ is $(\epsilon, \delta)$-DP. This mechanism provides privacy against the server as well. However, it incurs more accuracy loss than the central mechanism (Definition~\ref{def:dp}). The (single-message) shuffle model of differential privacy employs a shuffler $\mathcal{S}: \mathcal{Y}^n \rightarrow \mathcal{Y}^n$ which is a random permutation of its inputs. The algorithm $\mathcal{M}: \mathcal{S} \circ \mathcal{R}^n : \mathcal{X}^n \rightarrow \mathcal{Y}^n$ then provides $(\epsilon, \delta)$-DP against the server, but with the advantage that the local randomizer need only be $\epsilon_0$-LDP, with $\epsilon_0$ greater than $\epsilon$. Hence the gathered inputs are less noisy, resulting in better accuracy for the same value of $\epsilon$ as compared to a purely LDP algorithm. 

\descr{Local Differential Privacy Algorithm.} We use the local differential privacy (LDP) algorithm, i.e., randomizer, from~\cite{privacy-blanket-shuffle}, shown below. This is a $\kappa$-ary randomized response algorithm, where $\kappa \ge 2$. With $\kappa=2$, we get the binary randomized response. The goal of the server is to output the sum of true values. While we use this algorithm for our differentially private protocol, this can be replaced by any other algorithm whose the output lies in a discrete domain, e.g., the randomizer from~\cite{privacy-blanket-shuffle} for releasing the sum of normalized real numbered values to a given precision. 

% Without loss of generality, we assume that the input $x \in [0, 1]$. For a precision level $k$, we first encode $x$ as an integer as follows~\cite{privacy-blanket-shuffle}:
% \[
% \overline{x} = \lfloor xk \rfloor + \text{Ber}(xk - \lfloor xk \rfloor)
% \]
% It can be easily verified that encoding in this way ensures that $\mathbb{E}(\overline{x}/k)$, which is the expected value of the decoded $\overline{x}$, is exactly $\mathbb{E}(x)$. This makes the range of $\overline{x}$ equal to $\{0, 1, \ldots, k\}$. After this pre-processing step, we can use the local differential privacy algorithm from \cite{privacy-blanket-shuffle}:

\begin{algorithm}[!ht]
\SetAlgoLined
%\SetAlCapSkip{1em}
\DontPrintSemicolon{}
%\let\oldnl\nl% Store \nl in \oldnl
%\newcommand{\nonl}{\renewcommand{\nl}{\let\nl\oldnl}}% Remove line number for one line
%\nonl\TitleOfAlgo{{Select Features}}
\SetKwInOut{Input}{Input}
\SetKwInOut{State}{State}
\Input{$\kappa \in \mathbb{N}, {x} \in \{0, 1, \ldots, \kappa-1\}, \gamma \in [0, 1]$.}
%\State{A noise dictionary, denoted $\mathsf{nd}$, with keys from subsets of $U$ and values in $\mathbb{Z}_{\pm r}$.}
$b \leftarrow \text{Ber}(\gamma)$\;
\eIf {$b = 0 $}{
	$y \leftarrow {x}$\;
	}{
        $y \overset{{\scriptscriptstyle\$}}{\leftarrow} \{0, 1, \ldots, \kappa-1\}$\;
        }
return $y$\;
\caption{Local Differential Privacy Algorithm for Sums~\cite{privacy-blanket-shuffle}}
\label{algo:bounded-noise}
\end{algorithm}

This algorithm is $\epsilon$-differentially private with
% , as long as:
% \[
% \frac{1 - \gamma(\kappa-1)/\kappa}{\gamma/\kappa} \leq e^{\epsilon}
% \]
% Equating the left hand side to right hand side, we get:
\[
\gamma = \frac{\kappa}{\kappa - 1 + e^{\epsilon}}.
\]
%Thus, we can set $\gamma$ to this value given $\epsilon$ and $\kappa$. 

\descr{The Server.} The server sums all values $y_i$ and outputs the de-biased sum~\cite{privacy-blanket-shuffle} as:
\begin{equation}
\label{eq:de-bias}
    \frac{1}{1 - \gamma}\left( {\sum_{i=1}^n y_i} - \frac{\gamma (\kappa-1)n}{2} \right)
\end{equation}
For completeness, we show the derivation of this quantity in Appendix~\ref{app:de-bias}.

\descr{Privacy Amplification via Shuffling.} If the inputs from all clients are randomly shuffled, i.e., permuted via a permutation chosen uniformly at random, the resulting protocol amplifies the privacy of the standalone LDP mechanism. This means that we can use higher values of $\epsilon_0$. 
%Ignoring logarithmic terms, $\epsilon_0$ is proportional to $n$ and inversely proportional to $\delta$. 
Given a value of $\epsilon, \delta$ and $n$, we can use the script provided in~\cite{privacy-blanket-shuffle} to obtain a value of $\epsilon_0$ 
%which uses a tighter analysis than given by the implicit bounds in the paper. 
For instance, for the LDP mechanism described above with $\kappa=10$, with $n = 100$ participants, $\delta = 10^{-6}$ and $\epsilon = 0.1$, we get $\epsilon_0 \approx 1.0032$ through the Bennett bound from this script. This means, we can use the mechanism 10 times more then the LDP mechanism alone.

%\descr{Utility.} Involves a debiasing step.

\descr{Implementing the Shuffle.} Literature discusses multiple ways to implement the shuffle. One way is via mix-networks~\cite{dp-shuffle}, as shown in Figure~\ref{fig:mixnet-shuffle}, another is through a trusted third party~\cite{bittau2017prochlo}. In either case, we involve another party with the assumption that it does not collude with the server. In practice, this may also introduce additional overhead as the mix-network itself may be implemented using several servers relaying the message from one to another. As we shall see, using the properties of entanglement we do not need a third party for the shuffle in the quantum variant of the protocol.

\subsection{Background in Quantum Computation}
Let $d \geq 2$. Let $\mathbb{Z}_d = \{0, 1, \ldots, d - 1\}$, where addition of elements of $\mathbb{Z}_d$ is assumed to be done modulo $d$. We work in the $d$-dimensional complex Hilbert space $\mathbb{C}^d$~\cite{havlicek2007gen-pauli}, which for our purposes is simply a vector space endowed with an inner product. We call this the state space. A column vector from $\mathbb{C}^d$ is denoted by $\ket{v}$ (pronounced ``ket'' $v$). We denote the standard computational basis of $\mathbb{C}^d$ by $\{ \ket{s} : s \in \mathbb{Z}_d\}$, where $\ket{s}$ is the column vector with all zeros except at position $s$, where it is 1. Operations on vectors are defined by linear operators (matrices) $A$. We will exclusively consider matrices that map vectors from $\mathbb{C}^d$ to $\mathbb{C}^d$. Thus, $A$ is a $d \times d$ square matrix with elements from $\mathbb{C}^d$. For a matrix $A$, let $A^\dagger$ denote its Hermitian conjugate, obtained by taking the complex conjugate of each element and then transposing the matrix. A matrix is \emph{normal} if $A A^\dagger = A^\dagger A$. A normal matrix is \emph{unitary} if $A A^\dagger = A^\dagger A = I$. Note that we define $\ket{\psi}^\dagger$ as $\bra{\psi}$, which is a row vector with each entry being the complex conjugate of the corresponding entry in $\ket{\psi}$. With this notation, the dot product between two vectors $\ket{v}$ and $\ket{w}$ is defined as $\braket{v}{w}$ (a complex number), and their cross product as $\ket{v}\bra{w}$ (a $d \times d$ matrix). It can easily be verified that $\braket{v}{w}  = \braket{w}{v}^*$. We define a \emph{qudit} $\ket{\psi}$ as a vector of unit norm, i.e., $\sqrt{\braket{\psi}{\psi}} = 1$. We shall also call it a state vector. With this notation, note that for any two computational basis vectors $\ket{s}$ and $\ket{r}$, we have $\braket{s}{r} = \delta_{r, s}$, where $\delta_{r, s} = 1$ if $r = s$, and $0$ otherwise. The evolution of the state is described via unitary transformations, i.e., applying unitary matrices to states. These can be implemented as gates in a quantum circuit.

\descr{Primitive $d$th Roots of Unity.} Before we describe the unitary matrices used in our protocol, we introduce some facts about the $d$th roots of unity. Let $d \geq 2$ be an integer, and let $\omega = e^{i2\pi/d}$. We shall use the following fact about the $d$th roots of unity: $\{1, \omega, \omega^2, \ldots, \omega^{d-1}\}$ which forms an Abelian group under multiplication.
\begin{proposition}
\label{prop:unity}
Let $\omega = e^{i2\pi/d}$, where $d \geq 2$ is an integer. Let $x$ be an integer. Then (a) $\omega^x = \omega^{x \pmod{d}}$. Furthermore (b):
\begin{equation}
\label{eq:sum-unity}
\sum_{j = 0}^{d-1} (\omega^j)^x =
    \begin{cases}
    0, \text{ if } x \neq 0,\\
    d, \text{ if } x = 0
\end{cases}
\end{equation}
\end{proposition}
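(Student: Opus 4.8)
The plan is to prove the two parts of Proposition~\ref{prop:unity} separately, as part (a) is essentially a restatement of the group structure of the roots of unity, while part (b) is a standard finite geometric series argument.

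\medskip

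\noindent\textbf{Part (a).} First I would write $x = qd + r$ by the division algorithm, where $r = x \pmod d$ and $q \in \mathbb{Z}$. Then $\omega^x = \omega^{qd+r} = (\omega^d)^q \omega^r$. Since $\omega = e^{i2\pi/d}$, we have $\omega^d = e^{i2\pi} = 1$, so $(\omega^d)^q = 1$ and therefore $\omega^x = \omega^r = \omega^{x \bmod d}$. This also covers the case of negative $x$, since the division algorithm produces a remainder $r \in \{0, 1, \ldots, d-1\}$ regardless of the sign of $x$.

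\medskip

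\noindent\textbf{Part (b).} For the case $x = 0$, each summand is $(\omega^j)^0 = 1$, so the sum is $d$. For the case $x \neq 0$, I would rewrite the sum as a geometric series in the common ratio $\omega^x$:
\[
\sum_{j=0}^{d-1} (\omega^j)^x = \sum_{j=0}^{d-1} (\omega^x)^j.
\]
If $\omega^x = 1$ the standard closed form fails, so the key step is to argue that $\omega^x \neq 1$ whenever $x \not\equiv 0 \pmod d$; I would handle this by first reducing to $x \in \{1, \ldots, d-1\}$ using part (a) and the fact that the summand depends only on $x \bmod d$, and then noting that $\omega = e^{i2\pi/d}$ is a \emph{primitive} $d$th root of unity, so $\omega^x = 1$ exactly when $d \mid x$. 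Given $\omega^x \neq 1$, the finite geometric series formula gives
\[
\sum_{j=0}^{d-1} (\omega^x)^j = \frac{(\omega^x)^d - 1}{\omega^x - 1} = \frac{(\omega^d)^x - 1}{\omega^x - 1} = \frac{1^x - 1}{\omega^x - 1} = 0,
\]
using $\omega^d = 1$ in the numerator.

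\medskip

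\noindent The main obstacle is not the computation itself but ensuring the two cases of part (b) are cleanly separated: one must be careful that the reduction $x \mapsto x \bmod d$ does not accidentally map a nonzero $x$ to $0$ (which it never does, since $x \neq 0 \pmod d$ precisely guards against this), and that the denominator $\omega^x - 1$ is genuinely nonzero before invoking the geometric series closed form. Making the primitivity of $\omega$ explicit is the cleanest way to justify this nonvanishing.
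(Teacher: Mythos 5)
Your proposal is correct and follows essentially the same route as the paper's own proof: part (a) via the division algorithm and $\omega^d = 1$, and part (b) via the finite geometric series with ratio $\omega^x$, using part (a) to reduce to $x \in \mathbb{Z}_d$ and primitivity of $\omega$ to justify $\omega^x \neq 1$. Your explicit remark about guarding the nonvanishing of the denominator is a slightly more careful articulation of the same step the paper performs.
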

\begin{proof}
For part (a) let $x \equiv y \pmod{d}$, where $0 \leq y \leq d - 1$. Then there exists an integer $t$ such that:
\[
x = td + y
\]
Raising $\omega$ to this power we see that $\omega^x = (\omega^d)^t \omega^y = \omega^y$, where we have used the fact that $\omega^d = e^{i2\pi} = 1$. 

For part (b), through the sum of first $d$ terms of a geometric series, we get:
\[
\frac{1 - (\omega^x)^d}{1 - (\omega^x)} = \frac{1 - (\omega^d)^x}{1 - (\omega^x)} = 0,
\]
if $\omega^x \neq 1$. We see that $\omega^x = 1$ when $ e^{i2\pi x/d} = 1$, which is only possible if $x/d$ is an integer. From part (a), we may assume $x \in \mathbb{Z}_d$. Therefore, this is only possible if $x = 0$. When $x = 0$, it is straightforward to see that the sum is $d$.    
\end{proof}

\descr{Single Qudit Operations.} The (generalized) $X$ and $Z$ operators are defined as:
\[
X\ket{s} = \ket{s+1},\; Z\ket{s} = \omega^s \ket{s}.
\]
This gives the (linear) operators:
\[
X = \sum_{j = 0}^{d-1} \ket{j+1}\bra{j},\; Z = \sum_{j = 0}^{d-1} \omega^j \ket{j}\bra{j}.
\]
From this, the identities $X^d = I$ and $Z^d = I$ are obvious. The generalized Hadamard gate is defined as~\cite{wang2020qudits}:
\begin{equation}
\label{eq:hadamard}
H \ket{s} = \frac{1}{\sqrt{d}} \sum_{j = 0}^{d-1} \omega^{js} \ket{j}    
\end{equation}

\begin{proposition} 
\label{prop:hadamard-unitary}
The generalized $X$, $Z$ and the Hadamard operators $H$ are unitary.
\end{proposition}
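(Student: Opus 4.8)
The plan is to verify the defining condition $A A^\dagger = A^\dagger A = I$ for each of the three operators, handling $X$ and $Z$ by short structural arguments and reserving the one genuine computation for $H$. Throughout I would read off matrix elements in the computational basis and use that $\{\ket{s}\}$ is orthonormal, i.e.\ $\braket{s}{r} = \delta_{r,s}$.

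For $X$ and $Z$ the verification is essentially bookkeeping. Since $X\ket{s} = \ket{s+1}$ with addition mod $d$, the operator $X$ is the cyclic shift that permutes the orthonormal basis onto itself, and an operator carrying an orthonormal basis to an orthonormal basis is automatically unitary; concretely, $X^\dagger = \sum_j \ket{j}\bra{j+1}$, whence $X^\dagger X = \sum_{j,k}\delta_{j+1,k+1}\ket{j}\bra{k} = \sum_j \ket{j}\bra{j} = I$, using that $j+1 \equiv k+1 \pmod{d}$ iff $j \equiv k \pmod{d}$, and $X X^\dagger$ is identical. For $Z = \sum_j \omega^j \ket{j}\bra{j}$, which is diagonal with unit-modulus entries, I would use $(\omega^j)^* = \omega^{-j}$ to get $Z^\dagger = \sum_j \omega^{-j}\ket{j}\bra{j}$, so that $Z^\dagger Z = \sum_j |\omega^j|^2 \ket{j}\bra{j} = I$ because $|\omega^j| = 1$.

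The substantive case is $H$, and this is the only place I expect a real step. Reading off the matrix elements from Equation~\eqref{eq:hadamard} gives $H_{js} = \frac{1}{\sqrt{d}}\,\omega^{js}$, hence $(H^\dagger)_{sj} = \frac{1}{\sqrt{d}}\,\omega^{-js}$, and forming the product yields $(H^\dagger H)_{st} = \sum_{j=0}^{d-1}\frac{1}{\sqrt{d}}\omega^{-js}\cdot\frac{1}{\sqrt{d}}\omega^{jt} = \frac{1}{d}\sum_{j=0}^{d-1}(\omega^j)^{t-s}$. The crux is evaluating this inner sum, which is exactly Proposition~\ref{prop:unity}(b) with $x = t-s$: it equals $d$ when $t-s \equiv 0 \pmod{d}$ and $0$ otherwise. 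Since $s,t \in \mathbb{Z}_d$ this forces $(H^\dagger H)_{st} = \delta_{st}$, i.e.\ $H^\dagger H = I$, and $H H^\dagger = I$ follows by the symmetric computation. The main (indeed only) obstacle is recognizing that the off-diagonal entries vanish, which reduces cleanly to the root-of-unity identity already established in Proposition~\ref{prop:unity}; everything else is routine.
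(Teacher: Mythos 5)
Your proof is correct and follows essentially the same route as the paper: the paper also treats $X$ and $Z$ as routine (it merely asserts they are "easily checked," whereas you spell the check out) and establishes unitarity of $H$ by reducing the product $H H^\dagger$ to the root-of-unity sum of Proposition~\ref{prop:unity}(b). The only difference is presentational — you compute matrix elements $(H^\dagger H)_{st}$ while the paper tracks the action $H H^\dagger\ket{s}$ on basis kets — but the decomposition and the key lemma are identical.
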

\begin{proof}
The fact that $X$ and $Z$ are unitary can be easily checked through their definitions. For $H$, we see that
\begin{align*}
    HH^\dagger \ket{s} &= H\left( \frac{1}{\sqrt{d}} \sum_{j = 0}^{d-1} \omega^{-js} \ket{j}\right)\\
            &= \frac{1}{\sqrt{d}} \sum_{j = 0}^{d-1} \omega^{-js} H\ket{j}\\
            &= \frac{1}{d} \sum_{j = 0}^{d-1} \omega^{-js} \left(\sum_{k = 0}^{d-1} \omega^{kj} \ket{k}\right)\\
             &= \frac{1}{d} \sum_{k = 0}^{d-1} \left(\sum_{j = 0}^{d-1} (\omega^{j})^{k-s}\right)  \ket{k}
\end{align*}
According to Proposition~\ref{prop:unity}, the term within the bracket is 0, unless $k - s \equiv 0 \pmod{d} \Rightarrow k \equiv s \pmod{d}$, in which case the sum is equal to $d$. Therefore, we get:
\begin{align*}
    HH^\dagger \ket{s} = \frac{1}{d} \cdot d \ket{s} = \ket{s}
\end{align*}
Similarly, $H^\dagger H = I$.
\end{proof}

\begin{proposition} 
\label{prop:ZX-interchange}
We have $HXH^\dagger = Z$ and $H^\dagger{Z}H = X$. 
\end{proposition}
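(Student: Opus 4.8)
The plan is to verify the first identity $HXH^\dagger = Z$ by checking that both sides agree on every computational basis vector $\ket{s}$, and then to obtain the second identity essentially for free from the unitarity of $H$ established in Proposition~\ref{prop:hadamard-unitary}. Since two linear operators that coincide on a basis are equal, acting on $\ket{s}$ suffices.

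First I would compute $H^\dagger\ket{s}$. Reading off the matrix form of $H$ from Equation~\eqref{eq:hadamard}, its conjugate transpose sends $\ket{s} \mapsto \frac{1}{\sqrt{d}}\sum_{j}\omega^{-js}\ket{j}$. Applying $X$ and using $X\ket{j} = \ket{j+1}$, I would reindex the sum (replacing the summation index $j$ by $k = j+1$) to pull out an overall factor of $\omega^{s}$; here part (a) of Proposition~\ref{prop:unity} guarantees that the shift of index is consistent modulo $d$. This yields $XH^\dagger\ket{s} = \frac{\omega^{s}}{\sqrt{d}}\sum_{k}\omega^{-ks}\ket{k}$.

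Next I would apply $H$ to this expression, producing a double sum $\frac{\omega^{s}}{d}\sum_{m}\bigl(\sum_{k}\omega^{k(m-s)}\bigr)\ket{m}$. Invoking part (b) of Proposition~\ref{prop:unity}, the inner sum over $k$ vanishes unless $m \equiv s \pmod{d}$, in which case it equals $d$. The surviving term collapses to $\omega^{s}\ket{s} = Z\ket{s}$, so $HXH^\dagger\ket{s} = Z\ket{s}$ for all $s$, which establishes the first identity.

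Finally, the second identity follows without any further computation: conjugating $HXH^\dagger = Z$ and using $H^\dagger H = H H^\dagger = I$ gives $H^\dagger Z H = H^\dagger(HXH^\dagger)H = X$. The only real bookkeeping hazard is the reindexing in the $X$ step --- making sure the extracted phase is exactly $\omega^{s}$ and that the shifted index still runs over all of $\mathbb{Z}_d$ --- after which the root-of-unity orthogonality collapse is routine.
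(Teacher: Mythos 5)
Your proposal is correct and follows essentially the same route as the paper: a direct computation of $HXH^\dagger\ket{s}$ on the computational basis, invoking Proposition~\ref{prop:unity} to collapse the resulting geometric sum to $\omega^s\ket{s}=Z\ket{s}$, and then deducing $H^\dagger Z H=X$ by conjugation using the unitarity of $H$. The only cosmetic difference is that you reindex to extract the phase $\omega^s$ before applying $H$, whereas the paper carries the shifted index $\omega^{k(j+1)}$ through and extracts $\omega^k$ afterward; both are equivalent bookkeeping.
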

\begin{proof}
\begin{align*}
    HXH^\dagger \ket{s} &= HX \left( \frac{1}{\sqrt{d}} \sum_{j = 0}^{d-1} \omega^{-js} \ket{j}\right)\\
     &= H\left( \frac{1}{\sqrt{d}} \sum_{j = 0}^{d-1} \omega^{-js} \ket{j+1}\right)\\
     &= \frac{1}{d} \sum_{j = 0}^{d-1} \omega^{-js} \left(\sum_{k = 0}^{d-1} \omega^{k(j+1)} \ket{k}\right)\\
      &= \frac{1}{d} \sum_{k = 0}^{d-1} \omega^k
      \left(\sum_{j = 0}^{d-1} (\omega^{j})^{k-s}\right)  \ket{k} \\
      &= \frac{1}{d} \omega^s \cdot d \ket{s} \\
      &= \omega^s \ket{s} = Z\ket{s}
\end{align*}
where again we have used Proposition~\ref{prop:unity}. Now applying Proposition~\ref{prop:hadamard-unitary} to $HXH^\dagger = Z$, we obtain $H^\dagger Z H = X$.
\end{proof}

Finally, we have:

\begin{proposition}
\label{pro:ex-es}
Let $s$ be an integer. Then $(X^s)^\dagger = X^{-s}$ and $(Z^s)^\dagger = Z^{-s}$.
\end{proposition}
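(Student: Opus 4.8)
The plan is to derive the claim directly from the unitarity of $X$ and $Z$ already established in Proposition~\ref{prop:hadamard-unitary}, together with the elementary fact that Hermitian conjugation reverses products, $(AB)^\dagger = B^\dagger A^\dagger$. Since $X$ is unitary we have $X^\dagger = X^{-1}$, and likewise $Z^\dagger = Z^{-1}$; these inverses exist and are themselves unitary, so every integer power $X^s$ (including negative $s$, interpreted via $X^{-1}$) is well defined.

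First I would treat the case $s \geq 0$. Applying $(AB)^\dagger = B^\dagger A^\dagger$ repeatedly to the product $X^s = X \cdots X$ (with $s$ factors) gives $(X^s)^\dagger = (X^\dagger)^s$, and substituting $X^\dagger = X^{-1}$ yields $(X^s)^\dagger = (X^{-1})^s = X^{-s}$, which is the desired identity. The same argument verbatim, with $Z$ in place of $X$, gives $(Z^s)^\dagger = Z^{-s}$. To extend to negative exponents, write $s = -t$ with $t > 0$, so that $X^s = (X^{-1})^t = (X^\dagger)^t$; then by the previous step and the fact that $\dagger$ is an involution, $(X^s)^\dagger = \bigl((X^\dagger)^t\bigr)^\dagger = X^t = X^{-s}$. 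The case $s = 0$ is immediate since $X^0 = I = I^\dagger$. Conceptually, all of this simply records that $s \mapsto X^s$ is a homomorphism from $(\mathbb{Z}, +)$ into the unitary group under which conjugation corresponds to inversion.

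As a cross-check one can argue directly from the definitions on the computational basis: from $X^s\ket{r} = \ket{r+s}$ one reads off $X^s = \sum_r \ket{r+s}\bra{r}$, so $(X^s)^\dagger = \sum_r \ket{r}\bra{r+s}$ sends $\ket{r} \mapsto \ket{r-s}$, matching $X^{-s}$; and $Z^s = \sum_j \omega^{js}\ket{j}\bra{j}$ gives $(Z^s)^\dagger = \sum_j \omega^{-js}\ket{j}\bra{j} = Z^{-s}$, where one uses $\omega^* = \omega^{-1}$ since $|\omega| = 1$. I expect no genuine obstacle here; the only points requiring a little care are the bookkeeping for negative exponents and being explicit that $(X^s)^\dagger = (X^\dagger)^s$ before invoking unitarity, rather than silently assuming those two operations commute.
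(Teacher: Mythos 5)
Your proof is correct and follows essentially the same route as the paper's: both arguments rest on the unitarity of $X$ and $Z$ together with their explicit action on the computational basis to identify $(X^s)^\dagger$ with the shift by $-s$. Your version is slightly more careful than the paper's in spelling out $(X^s)^\dagger = (X^\dagger)^s$ and in handling negative exponents explicitly, but there is no substantive difference in method.
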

\begin{proof}
Since $X$ is unitary, we have $X^{\dagger}X = I = X^\dagger X$. Therefore $(X^s)^\dagger X^s = I$. Now, $X^s \ket{r} = \ket{r + s}$. Therefore $(X^s)^\dagger X^s \ket{r} = (X^s)^\dagger \ket{r+s} \Rightarrow (X^s)^\dagger \ket{r+s} = \ket{r}$. Therefore, $(X^s)^\dagger = X^{-s}$. The proof for $Z$ is similar.
\end{proof}

\descr{Measurements.} To observe the current state of a quantum system, we need to measure the system. We will be using \emph{projective} measurements which are described by an \emph{obvervable}, i.e., a normal operator $M$ acting on the state space. According to the spectral theorem~\cite[\S 2.1.5]{nielsen2002quantum}, any normal operator is \emph{diagonalizable}, i.e., it can be written in terms of its eigenvalues and eigenvectors. Thus, we can write $M$ as
\begin{equation}
\label{eq:measurement}
    M = \sum_{m = 1}^d \lambda_m \ket{m}\bra{m},
\end{equation}
where $\lambda_m$ are the eigenvalues of $M$, $\ket{m}$ the corresponding eigenvectors, and $\{\ket{m}\}$ is an orthonormal basis of the state space. The measurement outcomes are precisely the eigenvalues, which we can map to integer outcomes: $\lambda_i \mapsto i$. The probability of obtaining the outcome $\lambda_m$ when measuring the state $\ket{\psi}$ is given as 
\begin{equation}
\label{eq:p_m}
p(m) = \bra{\psi}( \ket{m}\bra{m} )\ket{\psi} = \braket{\psi}{m}\braket{m}{\psi} =  \braket{\psi}{m}  \braket{\psi}{m}^* = | \braket{\psi}{m}|^2,    
\end{equation}
and the state of the system after the measurement changes to
\[
\frac{( \ket{m}\bra{m} )\ket{\psi}}{\sqrt{p(m)}} = \frac{\braket{m}{\psi}\ket{m}}{\sqrt{p(m)}} = \frac{\braket{m}{\psi}\ket{m}}{| \braket{\psi}{m}|} = \frac{\braket{m}{\psi}}{| \braket{m}{\psi}|} \ket{m},
\]
which is effectively $\ket{m}$ up to a global phase factor, i.e., $\braket{m}{\psi}/| \braket{m}{\psi}|$ of modulus 1, which can be ignored. 
We will be doing measurements in either the $Z$ or $X$ basis, i.e., either the observable $Z$ or $X$. Here, for instance, if measuring in the $Z$ (computational basis), the outcome is one of the $d$ eigenvalues $\omega^j$, where $0 \le j \le d - 1$, since the spectral decomposition of $Z$ is
\begin{equation}
\label{eq:z-measure}
Z = \sum_{j = 0}^{d-1} \omega^j \ket{j}\bra{j}.    
\end{equation}
A consequence of Proposition~\ref{prop:ZX-interchange} is that the $Z$ and $X$-basis are interchangeable via a Hadamard transform. For instance, to measure in the $X$ basis, whose eigenvectors are~\cite{wootton2011anyon}:
\[
\ket{\lambda} = \frac{1}{\sqrt{d}} \sum_{j = 0}^{d-1} \omega^{j\lambda} \ket{j}
\]
with eigenvalues $\omega^{-\lambda}$, we can apply the Hadamard operator $H$ and then measure in the computational basis $Z$ whose eigenvectors are $\ket{j}$ with corresponding eigenvalues $\omega^j$, as seen above. 

\descr{Multiple Qudits and Operations.} We shall be working on a composite system, i.e., a system with more than one qudit. The state space of a composite system is described as the tensor product of the state space of the individual qudits. If the $i$th qudit is denoted $\ket{\psi_i}$ then the state space of the $n$-qudit system is denoted $\ket{\psi_1} \otimes \cdots \otimes \ket{\psi_n}$. If $A$ is an operator on $\ket{\psi_1}$ and $B$ is an operator on $\ket{\psi_2}$, then the combined operation on the tensor product of these two states is defined as 
\begin{equation}
\label{eq:tensor-linear}
    (A \otimes B)(\ket{\psi_1} \otimes \ket{\psi_2}) = A\ket{\psi_1} \otimes B \ket{\psi_2}
\end{equation}
$A \otimes B$ is also a linear operator, and the definition easily extends to higher order systems. A couple of other useful properties of tensor products are as follows~\cite[\S 2.1.7]{nielsen2002quantum}:
\begin{align}
\label{eq:tensor-distrib}
    (\ket{\psi_1} \otimes \ket{\psi_2})(\bra{\psi_1} \otimes \bra{\psi_2}) &= \ket{\psi_1}\bra{\psi_1} \otimes \ket{\psi_2}\bra{\psi_2},\\
    \text{and } z(\ket{\psi_1} \otimes \ket{\psi_2}) &= (z\ket{\psi_1}) \otimes \ket{\psi_2} = \ket{\psi_1} \otimes (z\ket{\psi_2})\nonumber,
\end{align}
for a complex number $z$. Finally the tensor product distributes over addition of operators and states. To avoid excessive notation, we may write $\ket{\psi_1} \otimes \ket{\psi_2}$ as $\ket{\psi_1}\ket{\psi_2}$ or even $\ket{\psi_1\psi_2}$, as is conventional. If all qudits in the $n$ systems are in the state $\ket{\psi}$, then we use the compact notation $\ket{\psi}^{\otimes n}$ for the tensor product. For succinct representation, if $\ket{\mathbf{s}} = \ket{s_1} \otimes \cdots \otimes \ket{s_n}$, then we may also write it as $(s_1, s_2, \ldots, s_n)$, where each $s_i$ denotes the $i$th ket vector in the tensor product. The Hadamard operation on multiple qudits is defined as follows. Let $\mathbf{s} = (s_1, s_2, \ldots, s_n)$ and $\mathbf{j} = (j_1, j_2, \ldots, j_n)$, where $s_i, j_i \in \mathbb{Z}_d$. Then,
\begin{align*}
    H^{\otimes n}\ket{\mathbf{s}} &= \left(\frac{1}{\sqrt{d}} \sum_{j_1 = 0}^{d-1} \omega^{j_1s_1} \ket{j_1}\right) \cdots   \left(\frac{1}{\sqrt{d}} \sum_{j_n = 0}^{d-1} \omega^{j_ns_n} \ket{j_n}\right) \\
    &= \frac{1}{\sqrt{d^{n}}} \sum_{\mathbf{j} \in \mathbb{Z}^n_d} \omega^{\langle \mathbf{j}, \mathbf{s} \rangle} \ket{\mathbf{j}}
\end{align*}
In particular, if each $s_i = s$ are the same, we see that $\langle \mathbf{j}, \mathbf{s} \rangle = s \sum_{i=1}^n j_i = s \lVert \mathbf{j} \rVert$, where $\lVert \cdot \rVert$ is the $\ell_1$-norm. Thus,
\begin{equation}
\label{eq:hadamard-same-s}
    H^{\otimes n}\ket{s}^{\otimes n} = \frac{1}{\sqrt{d^{n}}} \sum_{\mathbf{j} \in \mathbb{Z}^n_d} \omega^{s \lVert \mathbf{j} \rVert} \ket{\mathbf{j}}.
\end{equation}

The (generalized) controlled-$X$ operator, denoted $CX$, is defined as the operator whose action is:
\begin{equation}
\label{eq:cnot}
    \ket{s}\ket{r} \rightarrow \ket{s}\ket{r + s} = \ket{s}X^s \ket{r}.
\end{equation}
Its conjugate is denoted as $CX^{\dagger}$, defined as the operator which maps $\ket{s}\ket{r}$ to $\ket{s}X^{-s} \ket{r} =\ket{s}\ket{r - s}$. We can see that the $CX$ gate is unitary as well due to Proposition~\ref{pro:ex-es}. 

\descr{Density Operator and Partial Trace.} Sometimes it is convenient to write the state of a system using the \emph{density matrix}. For an $n$-qudit system in state $\ket{\psi}$, its density matrix is defined as:
\[
\rho = \ket{\psi}\bra{\psi}
\]
An application of an operator $A$ on the qudit $\ket{\psi}$ in the density matrix representation is given as: $A\rho A^\dagger$, which is again another density matrix. An advantage of the density matrix representation of the state of the system is that we can find the description of subsystems in the composite state space. In particular if $\rho$ describes a system of two qudits, then we can find the state of the system of the first qudit as:
\[
\rho^1 = \text{tr}_2(\rho)
\]
where $\text{tr}_2()$ is an operator, called partial trace, defined as
\[
\text{tr}_2(\ket{a_1}\bra{b_1} \otimes (\ket{a_2}\bra{b_2})) = \ket{a_1}\bra{b_1} \text{tr}(\ket{a_2}\bra{b_2})
\]
where $\ket{a_i}$ and $\ket{b_i}$ are any two vectors in the state space $i$. Here, $\text{tr}()$ is the matrix trace, which for the cross product is defined as: $\text{tr}(\ket{a_2}\bra{b_2}) = \braket{a_2}{b_2}$. The definition of the partial trace is made complete by further noting that it is linear in its input. For instance, assume $n = 2$, and client $i$ owns qudit $\ket{\psi_i}$. Suppose the system is in the state $\ket{\psi} = \ket{\psi_1}  \otimes \ket{\psi_2}$. Then the density operator representation of the system is:
\[
\rho = \ket{\psi}\bra{\psi} = \left( \ket{\psi_1} \otimes \ket{\psi_2} \right) \left( \bra{\psi_1} \otimes \bra{\psi_2}\right) = \ket{\psi_1}\bra{\psi_1} \otimes \ket{\psi_2}\bra{\psi_2}
\]
Then the state of system 1, i.e., client 1, can be obtained via the partial trace:
\[
\rho^1 = \text{tr}_2(\rho) = \ket{\psi_1}\bra{\psi_1} \text{tr}(\ket{\psi_2}\bra{\psi_2}) = \ket{\psi_1}\bra{\psi_1} \braket{\psi_2}{\psi_2} = \ket{\psi_1}\bra{\psi_1},
\]
since the qudit $\ket{\psi_2}$ is defined as a unit vector. This is the state of system 1 as we would expect. The usefulness of the partial trace is more prominent when we consider entangled states, i.e., states that cannot be written as tensor products of the constituent states. 

We can also describe measurements in the density operator representation. Let $\ket{m}\bra{m}$ denote a measurement operator with outcome $m$ as in Eq.~\eqref{eq:measurement}. Then, given the state $\rho$, the probability of outcome $m$ is given by $p(m) = \text{tr}(\ket{m}\bra{m} \rho)$, and the post-measurement state is:
\[
\rho' = \frac{\ket{m}\bra{m} \rho \ket{m}\bra{m}}{p(m)}.
\]

% \descr{Quantum Differential Privacy}. Just like the classical case, we call a quantum channel $\mathcal{N}$ $(\epsilon, \delta)$ differentially private for $\epsilon, \delta \geq 0$ if for any input quantum states $\rho$ and $\sigma$ such that $\tau(\rho, \delta)\leq \kappa$ for some $\kappa\in (0,1]$, satisfies
% \[
% Pr[\mathcal{N}(\rho)\in S]\leq e^{\epsilon} Pr[\mathcal{N}(\sigma)\in S] + \delta
% \]

% where $S\subset Out(M)$ for any POVM $M$ and $\tau$ is the trace distance of the density matrices. 
\section{Proposed Protocol}
\label{sec:protocol}
Our quantum protocol for Algorithm~\ref{algo:bounded-noise} in the shuffle model is based on the e-voting protocol from~\cite{hillery2006voting} as well as the protocol from~\cite{qa-transmissions} for anonymously broadcasting a single bit within a group of nodes, with major differences which we explain in Section~\ref{sec:rw}. The protocol is described in Protocol~\ref{protocol:shuffle}. Before the start of the protocol, we assume that each client and the server share at least one generalized Bell state (Eq.~\eqref{eq:bell}). Here sharing means that the client owns one qudit and the server the other. The Bell state is written as a linear combination of states $\ket{jj} = \ket{j} \otimes \ket{j}$, for $0 \leq j \leq d-1$. Here, we assume that the first qudit in the tensor product is owned by the server, and the second by the client. Note also that we cannot write this state as a tensor product of two separate states $\ket{k} \otimes \ket{k}$. This means that these states are \emph{entangled}. 

This Bell state is used to create a quantum channel to \emph{teleport} the state $\ket{\psi_0}$ in Eq.~\ref{eq:ghz} which is called a GHZ state. This is again an entangled state. The server sends each qudit in this state to a separate client. To reliably send the GHZ state, one needs classical communication between the client and the server. This is shown in Figure~\ref{fig:teleport}, and we shall discuss it in detail in the next section. Client $i$ then applies the $Z$ operator to the qudit of the GHZ state sent by the server a total of $y_i$ times, where $y_i$ is the output from the LDP Algorithm~\ref{algo:bounded-noise}. The clients then apply the Hadamard gate to their qudits before measuring the result in the computational basis. After the measurement by each client we need a classical channel between the client and the server to send the measurement outcome to the server. The classical and quantum channels complete the picture of our communication network as shown in Figure~\ref{fig:channels}. In this section we assume that the processing of quantum circuits as well as communication of quantum states happens error free. In the next section, we provide details of these circuits, followed by their fault-tolerant implementation. For now, we show that the protocol is correct and secure against honest-but-curious clients and the server. 

\begin{figure}
\centering
%\resizebox{\columnwidth}{!}{
\begin{tikzpicture}
%Nodes
\node[circle, minimum width = 2cm, minimum height = 2cm, draw=black, fill=yellow!10, very thick] (client) {Client};

\node[circle, minimum width = 2cm, minimum height = 2cm, right=6cm of client.east, draw=black, fill=gray!10, very thick] (server) {Server};

%Channels

\node[] 							(bcstart) 			[yshift=0.3cm, left=0.08cm of client.east] {};

\node[] 							(bcend) 			[yshift=0.3cm, right=0.08 cm of server.west] {};

\draw[very thick] (bcstart) -- node [above] {Quantum channel via Bell pairs $(\leftarrow)$} (bcend.west);

\node[] 							(ccstart1) 			[yshift=-0.3cm, left=0.08cm of client.east] {};

\node[] 							(ccend1) 			[yshift=-0.3cm, right=0.08 cm of server.west] {};

\draw[] (ccstart1.east) -- (ccend1.west);

\node[] 							(ccstart2) 			[yshift=-0.35cm, left=0.08cm of client.east] {};

\node[] 							(ccend2) 			[yshift=-0.35cm, right=0.08 cm of server.west] {};

\draw[] (ccstart2.east) -- node [below] {Classical channel $(\leftrightarrows)$}(ccend2.west);
\end{tikzpicture}
%}
\caption{The two types of channels between each client and the server. The arrows indicate the direction in which information flows in our protocol using the respective channel.}
\label{fig:channels}
\end{figure}
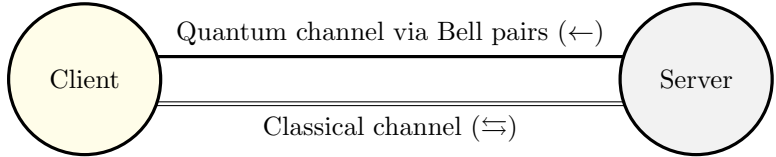

\RestyleAlgo{boxed}
\begin{algorithm}[!ht]
\SetAlgorithmName{Protocol}{List of protocols}
\SetAlgoLined
\LinesNumbered
%
%\SetAlCapSkip{1em}
\DontPrintSemicolon{}
%\let\oldnl\nl% Store \nl in \oldnl
%\newcommand{\nonl}{\renewcommand{\nl}{\let\nl\oldnl}}% Remove line number for one line
%\nonl\TitleOfAlgo{{Select Features}}
\SetKwInOut{Input}{Input}
\SetKwInOut{State}{State}
\Input{Number of clients $n$ and parameter $\kappa$ of Algorithm~\ref{algo:bounded-noise}, the server chooses $d > (\kappa-1)n$. The server shares a generalized Bell state with each client
\begin{align}
%\begin{equation}
\label{eq:bell}
  \ket{\beta} &= \frac{1}{\sqrt{d}} \sum_{j = 0}^{d-1} \ket{jj}  
%\end{equation}
\end{align}
\hrulefill
}
%\State{A noise dictionary, denoted $\mathsf{nd}$, with keys from subsets of $U$ and values in $\mathbb{Z}_{\pm r}$.}
The server prepares the initial generalized GHZ state:
\begin{equation}
\label{eq:ghz}
\ket{\psi_0} = \frac{1}{\sqrt{d}} \sum_{j = 0}^{d-1} \ket{j}^{\otimes n}
\end{equation}\;
For $i = 1$ to $n$, the server teleports qudit $i$ in $\ket{\psi_0} $ using the generalized Bell state $\ket{\beta}$ to client $i$ through the circuit shown in Figure~\ref{fig:teleport}.\;
Client $i$ runs LDP Algorithm~\ref{algo:bounded-noise} with input $x_i$ to obtain $y_i$.\;
Client $i$ applies $Z^{y_i}$ to its qudit, where
\begin{equation}
\label{eq:unit-yes}
Z = \sum_{j=0}^{d-1} \omega^j \ket{j}\bra{j}.
\end{equation}\;
Client $i$ applies the Hadamard gate $H$ (Eq.~\ref{eq:hadamard}) to his/her qudit.\;
Client $i$ measures his/her qudit in the computational basis $Z$, with possible outcomes $z_i \in \{0, 1, \ldots, d- 1\}$ corresponding to eigenvalues $\{\omega^0, \omega, \ldots, \omega^{d-1}\}$.\;
All clients send their measurement outcomes $z_i$ to the server.\;
The server computes $z \equiv - \sum_{i=0}^n z_i \pmod{d}$, and outputs the de-biased sum via Eq.~\eqref{eq:de-bias} as:
\[
\frac{1}{1 - \gamma}\left( z - \frac{\gamma (\kappa-1)n}{2} \right).
\]\;
\caption{Our proposed protocol for Algorithm~\ref{algo:bounded-noise} in the shuffle model.}
\label{protocol:shuffle}
\end{algorithm}

\begin{theorem}
\label{theorem:correctness}
If all clients execute the protocol honestly, then the server obtains $z$ (Step 8), which is the sum of differentially private outputs from all clients.
\end{theorem}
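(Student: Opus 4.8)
The plan is to propagate the joint $n$-qudit state through Steps~2--6 and show that the measurement outcomes are forced to satisfy $\sum_i z_i \equiv -\sum_i y_i \pmod{d}$, so that the quantity $z \equiv -\sum_i z_i$ computed by the server in Step~8 equals $\sum_i y_i$ exactly. Since this section assumes error-free operation, I would first observe that after the teleportation in Step~2 client $i$ holds qudit $i$ of the GHZ state, so the state shared across the $n$ clients is exactly $\ket{\psi_0}$ of Eq.~\eqref{eq:ghz}; all subsequent reasoning is about how the clients' local operations transform this state.

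First I would apply the phase operators of Step~4. Writing $Y = \sum_{i=1}^n y_i$ and using $Z^{y_i}\ket{j} = \omega^{j y_i}\ket{j}$ from Eq.~\eqref{eq:unit-yes}, each term $\ket{j}^{\otimes n}$ in $\ket{\psi_0}$ acquires the product of phases $\prod_{i} \omega^{j y_i} = \omega^{jY}$, so the state becomes $\frac{1}{\sqrt d}\sum_{j=0}^{d-1}\omega^{jY}\ket{j}^{\otimes n}$. Next I would apply the global Hadamard $H^{\otimes n}$ of Step~5 via Eq.~\eqref{eq:hadamard-same-s}, which sends each $\ket{j}^{\otimes n}$ to $\frac{1}{\sqrt{d^n}}\sum_{\mathbf k \in \mathbb Z_d^n}\omega^{j\lVert\mathbf k\rVert}\ket{\mathbf k}$. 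Collecting the $j$-dependent factors and exchanging the order of summation, the amplitude of each $\ket{\mathbf k}$ is proportional to $\sum_{j=0}^{d-1}(\omega^j)^{Y+\lVert\mathbf k\rVert}$.

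Here Proposition~\ref{prop:unity}(b) does the essential work: this inner sum equals $d$ when $Y+\lVert\mathbf k\rVert\equiv 0\pmod{d}$ and vanishes otherwise. Hence the post-Hadamard state collapses to the uniform superposition $\frac{1}{\sqrt{d^{n-1}}}\sum_{\lVert\mathbf k\rVert\equiv -Y}\ket{\mathbf k}$ over precisely those strings $\mathbf k$ with $\lVert\mathbf k\rVert\equiv -Y\pmod{d}$. Measuring in the computational basis in Step~6 therefore returns outcomes $z_i = k_i$ for some admissible $\mathbf k$, so that $\sum_i z_i \equiv -Y \pmod{d}$ holds deterministically, independent of which $\mathbf k$ is realized, giving $z \equiv -\sum_i z_i \equiv Y \pmod{d}$.

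The final step is to remove the modular reduction. I would invoke the server's choice $d > (\kappa-1)n$ together with $y_i \in \{0,\dots,\kappa-1\}$ to conclude $0 \le Y \le (\kappa-1)n < d$, so $Y$ coincides with its residue mod $d$ and the server recovers $z = Y = \sum_i y_i$ with no wraparound. I expect the conceptual crux to be arguing cleanly that the individual outcomes $z_i$ are random while their sum is rigidly fixed --- that the interference engineered by the Hadamard layer annihilates every $\ket{\mathbf k}$ outside the norm-constrained subspace --- which is exactly the point where Proposition~\ref{prop:unity} is indispensable; the remainder is bookkeeping of phases and the range bound on $Y$.
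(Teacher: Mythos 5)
Your proposal is correct and follows essentially the same route as the paper's proof: propagate $\ket{\psi_0}$ through the $Z^{y_i}$ phases, apply $H^{\otimes n}$ via Eq.~\eqref{eq:hadamard-same-s}, invoke Proposition~\ref{prop:unity}(b) to annihilate all $\ket{\mathbf k}$ with $\lVert\mathbf k\rVert\not\equiv -\sum_i y_i \pmod d$, and use $d>(\kappa-1)n$ to lift the modular sum to the true sum. The paper additionally spells out the orthogonality argument for why inadmissible measurement outcomes have zero probability and counts the $d^{n-1}$ admissible strings, but these are details of the same argument, not a different approach.
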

\begin{proof}
First note that each client's differentially private output $y_i$ is from the set $\{0, 1, \ldots, \kappa - 1\}$. Thus, $0 \leq \sum_{i = 1}^n y_i \leq (\kappa-1)n < d$. Assume the sum of the outputs is $m = \sum_{i = 1}^n y_i$. We shall show that the server exactly extracts $m$ from the protocol. After all $n$ clients have applied $Z^{y_i}$ (where $Z$ is given by Eq.~\ref{eq:unit-yes}) to their qudits, the entangled state $\ket{\psi_0}$ becomes
\begin{equation}
    \ket{\psi_m} = \frac{1}{\sqrt{d}}\sum_{j=0}^{d-1} (\omega^{j})^m \ket{j}^{\otimes n}.
\end{equation}

Now, each player applies the hadamard operator to his/her qubit, resulting in the state:
\begin{align}
    H^{\otimes n} \ket{\psi_m} &= \frac{1}{\sqrt{d}} \sum_j (\omega^j)^m H^{\otimes n} \ket{j}^{\otimes n} \nonumber\\
    &= \frac{1}{\sqrt{d}} \sum_j (\omega^j)^m \left( \frac{1}{\sqrt{d^n}} \sum_{\mathbf{k} \in \mathbb{Z}^n_d} \omega^{\langle \mathbf{k}, \mathbf{j} \rangle} \ket{\mathbf{k}} \right)\nonumber \\
    &=\frac{1}{\sqrt{d}} \sum_j (\omega^j)^m \left( \frac{1}{\sqrt{d^n}} \sum_{\mathbf{k} \in \mathbb{Z}^n_d} \omega^{j \lVert \mathbf{k} \rVert} \ket{\mathbf{k}} \right)\nonumber \\
    &=\frac{1}{\sqrt{d^{n+1}}} \sum_{\mathbf{k} \in \mathbb{Z}^n_d} \sum_j (\omega^j)^{m + \lVert \mathbf{k} \rVert} \ket{\mathbf{k}} \nonumber \\
    &=\frac{d}{\sqrt{d^{n+1}}} \sum_{\substack{\mathbf{k} \in \mathbb{Z}^n_d \\ m + \lVert \mathbf{k} \rVert \equiv 0}} \ket{\mathbf{k}} = \frac{1}{\sqrt{d^{n-1}}} \sum_{\substack{\mathbf{k} \in \mathbb{Z}^n_d \\ m + \lVert \mathbf{k} \rVert \equiv 0}} \ket{\mathbf{k}} \label{eq:post-hadamard-state},
\end{align}
where we have used Proposition~\ref{prop:unity} and Eq.~\eqref{eq:hadamard-same-s}. How many vectors $\mathbf{k} \in \mathbb{Z}^n_d$ satisfy $m + \lVert \mathbf{k} \rVert \equiv 0 \pmod{d}$ for a given $m$? One way to count this is to note that we can choose any number in $\mathbb{Z}_d$ for the first $n-1$ entries in $\mathbf{k}$, giving us $d^{n-1}$ possibilities. However, the last entry should be fixed as $k_n \equiv -m -\sum_{i=1}^{n-1}k_i \pmod{d}$. Thus, we have a total of $d^{n-1}$ such vectors. Furthermore, since $m < d$, each value of $m$ gives a different $k_n$. Therefore, each value of $m$ gives a subset of the space $\mathbb{Z}_d^n$, having $d^{n-1}$ vectors each, and furthermore all these subsets are disjoint, giving us a total of $(n+1)d^{n-1}$ vectors. Note that this division is a partition only if $d = n+1$.\footnote{Together with the condition $d > (\kappa - 1)n$ and noting that $n$ needs to be at least 2, we see that this is possible only if $\kappa = 2$, i.e., binary randomized response.}  

Next, for the measurement of the state in Eq.~\eqref{eq:post-hadamard-state}, consider the vector $\ket{\mathbf{z}} \in \mathbb{Z}_d^n$, whose $i$th entry, i.e., $\ket{z_i}$, corresponds to the $i$th measurement outcome of client $i$ (see Eq.~\eqref{eq:z-measure}). If $\ket{\mathbf{z}}$ is such that $m + \lVert \mathbf{z} \rVert \not\equiv 0 \pmod{d}$, then for all vectors $\ket{\mathbf{k}}$ in the sum of Eq.~\eqref{eq:post-hadamard-state}, there exists at least one $\ket{z_i}$, where $1 \le i \le n$, such that $z_i \neq k_i$, and hence 
\begin{align*}
\braket{\mathbf{z}}{\mathbf{k}} &= \braket{z_1}{k_1} \otimes \cdots \otimes \braket{z_i}{k_i} \otimes \cdots \otimes \braket{z_n}{k_n}    \\
        &= \braket{z_1}{k_1} \cdots 0 \cdots \braket{z_n}{k_n} = 0
\end{align*}
Hence, only those outcomes $ \mathbf{z}$ which satisfy $m + \lVert \mathbf{z} \rVert \equiv 0 \pmod{d}$ are probable due to Eq~\eqref{eq:p_m}. For any such outcome, the server can compute:
\[
m \equiv - \lVert \mathbf{z} \rVert \pmod{d},
\]
as the unique outcome.  
\end{proof}

Next we prove the (information-theoretic) privacy of the system. 

\begin{theorem}
\label{theorem:privacy}
Any honest-but-curious client does not learn the differentially private input $y_i$ of any other client in Protocol~\ref{protocol:shuffle}, except what is discernible through $\sum_i^{m} y_i$. Furthermore, the server does not learn the individual differentially private inputs $y_i$ of all clients, except what is discernible through $\sum_i^{m} y_i$. 
\end{theorem}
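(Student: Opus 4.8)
The plan is to characterize the complete \emph{view} of each honest-but-curious party and show that, as a random variable, this view is a function of the sum $m = \sum_{i=1}^n y_i$ alone (for the server), or is statistically independent of every other party's input (for a client). Since the parties are non-colluding, a client's view consists only of the classical corrections it receives during teleportation in Step 2, its own input $y_j$, and its own measurement outcome $z_j$; the server's view consists of everything it prepared, together with the tuple $(z_1, \ldots, z_n)$ it receives in Step 7.

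For a single client $j$, I would first note that teleportation (Step 2) strictly precedes every $Z^{y_i}$ operation (Step 4), so the teleportation corrections are sampled before any input enters the protocol and are therefore independent of all $y_i$. The remaining task is to analyze $z_j$, and the key computation is the reduced density operator on client $j$'s qudit. Tracing out the other $n-1$ qudits of $\ket{\psi_0}$ via the partial trace together with the orthonormality $\braket{s}{r} = \delta_{r,s}$ collapses the double sum and yields the maximally mixed state $I/d$. Because $Z^{y_j}$ and $H$ are unitary and $I/d$ is invariant under conjugation by any unitary, the state client $j$ actually measures is still $I/d$; hence by Eq.~\eqref{eq:p_m} the outcome $z_j$ is uniform on $\{0, \ldots, d-1\}$ regardless of $y_j$ and of every other $y_i$. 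This shows that a single client's entire view has a distribution independent of $\{y_i : i \neq j\}$, so it learns nothing about any other input.

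For the server, the crucial structural fact, already implicit in the correctness proof, is that immediately after Step 4 the global state is
\[
\ket{\psi_m} = \frac{1}{\sqrt{d}} \sum_{j=0}^{d-1} (\omega^j)^m \ket{j}^{\otimes n},
\]
which depends on the inputs \emph{only through their sum} $m$, since client $i$ contributes the phase $\omega^{j y_i}$ and these phases multiply to $\omega^{jm}$. Consequently the post-Hadamard state of Eq.~\eqref{eq:post-hadamard-state}, and therefore the joint distribution of the measured tuple $(z_1, \ldots, z_n)$ --- uniform over $\{\mathbf{z} : m + \lVert \mathbf{z} \rVert \equiv 0 \pmod{d}\}$ --- is itself a function of $m$ alone. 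I would phrase the conclusion as an indistinguishability statement: for any two input vectors $\mathbf{y}$ and $\mathbf{y}'$ with the same sum $m$, the only input-dependent part of the server's view, the tuple $(z_1, \ldots, z_n)$, is drawn from \emph{identical} distributions (the teleportation data being input-independent as above), so no post-processing the server applies can separate them. Hence the server's posterior over input tuples equals its prior conditioned only on the value of $m$, which is exactly the claim.

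The main obstacle is the server half, namely arguing rigorously that the \emph{full joint} distribution of the outcomes --- not merely each marginal $z_i$, which one can separately check is uniform --- carries no information beyond $m$. This is precisely where the explicit closed form of $\ket{\psi_m}$ is indispensable: because the entire pre-measurement state is pinned down by $m$, every statistic the server could compute from $(z_1, \ldots, z_n)$ has a distribution fixed by $m$, which closes the argument.
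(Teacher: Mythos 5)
Your proposal is correct, and its core ingredients coincide with the paper's: the reduced density operator of any single client's qudit is the maximally mixed state $\tfrac{1}{d}\sum_j\ket{j}\bra{j}$, and the global state $\ket{\psi_m}$ depends on the inputs only through $m$. The differences are in how each half is closed out, and they are worth noting. For the client half, the paper recomputes the reduced state after the Hadamard, verifies $p(\ell)=1/d$ by an explicit counting argument over the $d^{n-2}$ admissible vectors, and then analyzes the post-measurement state recursively (reducing to $n-1$ clients); you instead observe once that $I/d$ is invariant under conjugation by the client's local unitaries, which subsumes all of that in one line and is a legitimate shortcut for a non-colluding client who sees only its own outcome. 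You also explicitly dispose of the teleportation corrections by noting they are sampled before any input enters the protocol, a point the paper leaves implicit. For the server half your argument is actually \emph{stronger} than the paper's: the paper concludes only that each marginal $z_i$ is uniform and hence independent of $y_i$, whereas the server in fact observes the correlated tuple $(z_1,\ldots,z_n)$; your argument that the entire pre-measurement state, and hence the joint distribution of the tuple (uniform over $\{\mathbf{z}: m+\lVert\mathbf{z}\rVert\equiv 0 \pmod d\}$), is a function of $m$ alone is precisely what is needed to justify the clause ``except what is discernible through the sum,'' and it closes a gap the paper's marginal-only reasoning leaves open.
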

\begin{proof}
Let $m$ denote the sum of the differentially private outputs fromm all clients. Then the state becomes:
\begin{equation}
    \ket{\psi_m} = \frac{1}{\sqrt{d}}\sum_{j=0}^{d-1} (\omega^{j})^m \ket{j}^{\otimes n}.
\end{equation}
On their own, the clients cannot determine the phase. To see this, the density operator for the above state is:
\[
\rho_m = \ket{\psi_m} \bra{\psi_m} = \frac{1}{d} \sum_j \sum_k (\omega^{j-k})^m \ket{j}^{\otimes n}\bra{k}^{\otimes n}.
\] 
For any client, tracing out the rest of the system reveals:
\begin{align*}
    \frac{1}{d}\sum_{j}\sum_{k}(\omega^{j-k})^m \ket{j}\bra{k} \delta_{j,k}
    &= \frac{1}{d}\sum_{j}(\omega^{j-j})^m \ket{j}\bra{j} \\
    &= \frac{1}{d}\sum_{j} \ket{j}\bra{j},
\end{align*}
which is the same state regardless of the value of $m$, and hence no information is leaked. Next, let us examine the state of a client after the application of the Hadamard gate (Step 5). The density matrix of the system is given as:
\[
\rho'_m = \frac{1}{d^{n-1}} \sum_{\mathbf{k}} \sum_{\mathbf{j}} \ket{\mathbf{k}} \bra{\mathbf{j}}, 
\]
where the sum is understood to be over all vectors $\mathbf{k},  \mathbf{j} \in \mathbb{Z}^n_d$ satisfying $m + \lVert \mathbf{k} \rVert \equiv m + \lVert \mathbf{j} \rVert \equiv 0 \pmod{d}$. Without loss of generality, assume we measure the state of player 1. Tracing out the rest of the system (applying partial trace) yields:
\begin{align*}
   \frac{1}{d^{n-1}} \sum_{\mathbf{k}} \sum_{\mathbf{j}} \ket{k_1}\bra{j_1} \braket{k_2}{j_2} \cdots \braket{k_n}{j_n}
\end{align*}
All dot products $\braket{k_i}{j_i}$ with $2 \le i \le n$ are 1 only if $k_i = j_i$. But since $m + \sum_{i = 1}^{n} k_i \equiv m + \sum_{i = 1}^{n} j_i \pmod{d}$, this means that $k_1 = j_1$ as well. Therefore, we get
\[
 \frac{1}{d^{n-1}} \sum_{\mathbf{k}}  \ket{k_1}\bra{k_1} 
\]
As noted in the proof of Theorem~\ref{theorem:correctness}, for each choice of $k_1 \in \mathbb{Z}_d$, there are $d^{n-2}$ possible vectors $\mathbf{k}$ satisfying $m + \lVert \mathbf{k} \rVert \equiv 0 \pmod{d}$. Furthermore, this count remains the same if we replace player 1 with any other player. Hence each player has the same state, again regardless of the value of $m$. 

Finally we look at the measurement outcomes of a player. Again, without loss of generality, assume it is player 1 who measures first. Let us fix a measurement result $\ell$. We get:
\begin{align*}
    p(\ell) &= \frac{1}{d^{n-1}} \sum_{\mathbf{k}} \sum_{\mathbf{j}} \text{tr}\left((\ket{\ell}\bra{\ell} \otimes I_2 \otimes \cdots \otimes I_n)(\ket{\mathbf{k}} \bra{\mathbf{j}})\right)\\
    &=\frac{1}{d^{n-1}} \sum_{\mathbf{k}} \sum_{\mathbf{j}} \text{tr}\left( \ket{\ell}\braket{\ell}{k_1} \bra{j_1} \otimes \ket{k_2}\bra{j_2} \otimes \cdots \otimes \ket{k_n}\bra{j_n} \right)\\
    &=\frac{1}{d^{n-1}} \sum_{\mathbf{k}} \sum_{\mathbf{j}} \braket{\ell}{k_1} \braket{\ell}{j_1} \braket{k_2}{j_2} \cdots \braket{k_n}{j_n}\\
    &=\frac{1}{d^{n-1}} \sum_{\mathbf{k}}  \braket{\ell}{k_1} \braket{\ell}{k_1} = \frac{1}{d^{n-1}} (d^{n-2}) = \frac{1}{d},
\end{align*}
where again there are exactly $d^{n-2}$ possible vectors $\mathbf{k}$ satisfying $m + \lVert \mathbf{k} \rVert \equiv 0 \pmod{d}$ with $k_1 = \ell$. Hence each of the $d$ possible outcomes are equally likely for each player. 

The post-measurement state is:
\[
\frac{\text{tr}(\ket{\ell}\bra{\ell} \rho'_m \ket{\ell}\bra{\ell}}{p(\ell)} = \frac{1}{d^{n-2}} \sum_{\substack{\mathbf{k} \in \mathbb{Z}^n_d \\ m + \lVert \mathbf{k} \rVert \equiv 0\\ k_1 = \ell}} \sum_{\substack{\mathbf{j} \in \mathbb{Z}^n_d \\ m + \lVert \mathbf{j} \rVert \equiv 0\\ j_1 = \ell}} \ket{\mathbf{k}}\bra{\mathbf{j}}.
\]
Thus, effectively, we are in a system with one client less. Therefore, the analysis for the state of the current system and post-measurement is the same as before with $n$ replaced with $n-1$.

For privacy from the server, note that since all $d$ possible outcomes are equally likely for each client, the received measurement outcome $z_i$ from client $i$ is independent of its differentially private input $y_i$. Hence, the server does not learn the differentially private input $y_i$.  
\end{proof}

\descr{Efficiency.} As we discuss in the Sections~\ref{sec:circuits} and \ref{sec:fault-tolerance} our protocol can be implemented using quantum circuits involving only \emph{Clifford} gates. See Section~\ref{sec:fault-tolerance} for a definition. A key property of the Clifford group of gates is that they can be efficiently simulated by a classical computer; see, for example~\cite[\S 6.2]{gottesman2016errorbook}. The reason for this is that the Clifford group on $n$ qudits only consists of up to $3dn$ individual qudit gates of the type $X^{j}$ and $Z^k$ together with a phase (see Eq.~\eqref{eq:pauli-group}), whereas a general unitary gate on $n$ qudits needs up to $d^{2n}$ parameters to be specified. Thus our protocol is highly efficient.  
% We work in the generalized Pauli group of qudits~\cite{havlicek2007gen-pauli}. Let $\mathbb{C}^d$ be the $d$-dimensional complex Hilbert space whose basis elements form the set $\{\ket{s}\}$ for all $s \in \mathbb{Z}_d$. 

% For our protocol, the initial state to be shared with $n$ participants is prepared as:
% \begin{equation}
% \label{eq:ghz}
% \ket{\psi_0} = \frac{1}{\sqrt{d}} \sum_{j = 0}^{d-1} \ket{j}^{\otimes n},
% \end{equation}
% where $n$ is the number of players, and $d > n$.  This is a generalized GHZ state for qudits. Each player has one qudit. Any player sending a bit, applies the unitary transformation:
% \begin{equation}
% \label{eq:unit-yes}
% Z = \sum_{j=0}^{d-1} \omega^j \ket{j}\bra{j}.
% \end{equation}

% Thus, the value of $m$ partitions the vector space $\mathbb{Z}_d^n$ into $d$ subspaces each containing $d^{n-1}$ vectors, i.e., those vectors satisfying $m + \lVert \mathbf{k} \rVert \equiv 0 \pmod{d}$. Each players measures in the $Z$-basis (computational basis), which means each of these $d^{n-1}$ vectors are equally likely to be the outcome. Each player sends his/her qudit value using the classical communication channel to the server. Given any value of $m$ each of the $d$ qudits are equally likely to be the outcome of an individual measurement. But since the vector as a whole satisfies $m + \lVert \mathbf{k} \rVert \equiv 0 \pmod{d}$, summing up the individual qudits reveals $-m$ modulo $d$, from which the server can recover $m$ in a straightforward manner. 

\section{Quantum Circuits}
\label{sec:circuits}
In this section, we show the circuits constructing the different quantum states in the description of Protocol~\ref{protocol:shuffle}, and show that the construction is correct.  

\descr{Generalized Bell State.} The circuit to create the generalized Bell state from Eq.~\ref{eq:bell} is shown in Figure~\ref{fig:bell-state}. This is analogous to the construction of the Bell state $\frac{1}{\sqrt{2}}( \ket{00} + \ket{11})$ for qubit systems~\cite[\S 1.3.6]{nielsen2002quantum}. After the application of the Hadamard gate on the first qudit, according to Eq.~\eqref{eq:hadamard}, the state is transformed to:
\[
\ket{0} \otimes \ket{0} \rightarrow \frac{1}{\sqrt{d}}\sum_{j=0}^{d-1} \ket{j} \otimes \ket{0}
\]
Next, the $CX$ gate is applied which according to Eq.~\eqref{eq:cnot} transforms the state to:
\[
\frac{1}{\sqrt{d}}\sum_{j=0}^{d-1} \ket{j} \otimes \ket{0} \rightarrow \frac{1}{\sqrt{d}}\sum_{j=0}^{d-1} \ket{jj},
\]
as required. Note that the `dot' in the figure represents the control qudit for the $CX$ gate. We assume that at the time of registration, the server shares multiple Bell states per client with cardinality equal to the number of times the LDP mechanism is invoked (see Proposition~\ref{prop:composition}). We assume that the first qudit of $\ket{\beta}$ is with the server and the second with the client. 

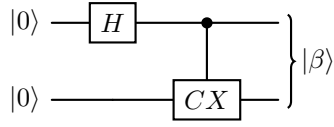
\begin{figure}
\begin{center}
\begin{quantikz}
\lstick{\ket{0}} & \gate{H} & \ctrl{1} & \rstick[2]{\ket{\beta}}\\ 
\lstick{\ket{0}} &  & \gate{CX} & 
\end{quantikz}
\end{center}
    \caption{Circuit for creating the generalized Bell state from Eq~\eqref{eq:bell}.}
    \label{fig:bell-state}
\end{figure}

\descr{The Initial GHZ State.} The circuit to prepare the initial GHZ state $\ket{\psi}_0$ from Eq.\eqref{eq:ghz} is shown in Figure~\ref{fig:ghz-state}. This is almost identical to the circuit for preparing the generalized Bell state, except now we have the initial state prepared as $\ket{0}^{\otimes n}$, and a total of $n$ $CX$ gates. 

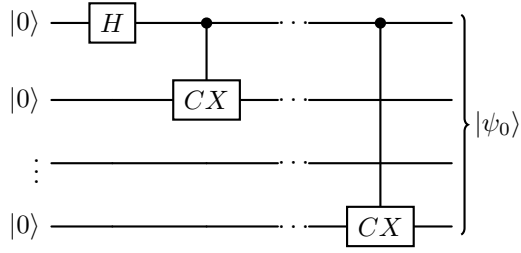
\begin{figure}
\begin{center}
\begin{quantikz}
\lstick{\ket{0}} & \gate{H} & \ctrl{1} & \ldots & \ctrl{3} & \rstick[4]{\ket{\psi_0}}\\ 
\lstick{\ket{0}} &  & \gate{CX} & \ldots & & \\
\lstick{\vdots} & & & \cdots & &\\
\lstick{\ket{0}} &  &  & \ldots & \gate{CX} & \\
\end{quantikz}
\end{center}
    \caption{Circuit for creating the initial GHZ state $\ket{\psi_0} = \frac{1}{\sqrt{d}} \sum_{j = 0}^{d-1} \ket{j}^{\otimes n}$ from Eq~\eqref{eq:ghz}.}
    \label{fig:ghz-state}
\end{figure}

\descr{Teleportation of the Initial GHZ State.} 
%The result can thus be communicated to the server in a fault tolerant way due to the use of the classical communication channel. What is left is to transmit the initial generalized GHZ state $\ket{\psi_0}$ to the $n$ parties. To do this reliably, we use the generalized Bell state:
% \[
% \ket{\beta} = \frac{1}{\sqrt{d}} \sum_{j = 0}^{d-1} \ket{jj}
% \]
% We assume that the first qudit of $\ket{\beta}$ is with the server and the second with the player. Thus, the server shares $n$ such Bell pairs with the $n$ players. Then 
Each of the $n$ qudits of the state $\ket{\psi_0}$ can be sent to a client via the circuit shown in Figure~\ref{fig:teleport} analogous to the quantum teleportation circuit for a qubit using the Bell state with $d = 2$~\cite[\S 1.3.7]{nielsen2002quantum}. The double wires indicate classical information. Let us assume a qudit $T$ of the initial state is getting teleported to a client. So the combined initial state with the Bell pair is:
\[
\ket{\psi_0}\ket{\beta} = \frac{1}{d} \sum_j \sum_k \ket{k}^{\otimes (n-1)}_{R} \ket{k}_{T}  \ket{jj}_{S C}  
\]

Where $\ket{j}_S$ is the server owned qudit of the Bell pair, $\ket{j}_C$ is the client owned qudit of the Bell pair, and $R$ denotes the rest of the initial state. Rearranging this we get:
\[
\frac{1}{d} \sum_j \sum_k \ket{kj}_{TS} \ket{k}^{\otimes (n-1)}_{R}   \ket{j}_{C}  
\]

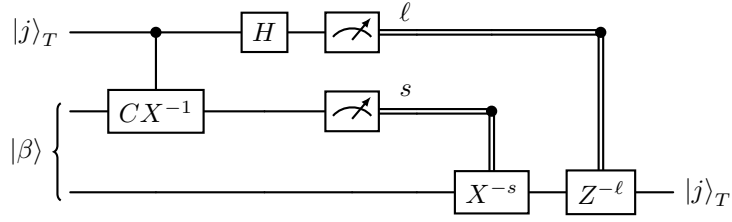
\begin{figure}
\begin{center}
\begin{quantikz}
\lstick{$\ket{j}_T$} & \ctrl{1} & \gate{H} & \meter[label style={right = 0.5cm}]{\ell} & \setwiretype{c} & & \ctrl{0}\\ 
\lstick[2]{\ket{\beta}} & \gate{CX^{-1}} &  & \meter[label style={right = 0.5cm}]{s} & \setwiretype{c} & \ctrl{0} \\
&&&&& \gate{X^{-s}}\wire[u][1]{c} & \gate{Z^{-\ell}} \wire[u][2]{c} & \rstick{$\ket{j}_T$}
\end{quantikz}
\end{center}
    \caption{Teleportation circuit for teleporting an individual qudit $\ket{j}_T$ of the GHZ state to a client using generalized Bell pairs $\ket{\beta}$. Here, the server has the top two qudits, and the client the bottom.}
    \label{fig:teleport}
\end{figure}

%Let the qudit state $\ket{\phi} = \sum_{j=0}^{d-1} \alpha_j\ket{j}$, where $\sum_{j=0}^{d-1} |\alpha_j|^2 = 1$. The initial state is:%

%After the $CX^{-1}$ gate, we get:
%\[
%\frac{1}{\sqrt{d}} \sum_j \sum_k \alpha_k \ket{k(j-k)j} 
%\]

After the $CX^{-1}$ gate, we get:\

\[
\frac{1}{d} \sum_j \sum_k \ket{k(j-k)}_{TS} \ket{k}^{\otimes (n-1)}_{R}   \ket{j}_{C} 
\]

Next is the Hadamard gate, giving us the state:
\[
\frac{1}{d\sqrt{d}} \sum_j \sum_k \sum_{\ell} \omega^{\ell k}\ket{\ell(j-k)}_{TS} \ket{k}^{\otimes (n-1)}_{R}   \ket{j}_{C} 
\]

%Next is the Hadamard gate, giving us the state:
%\[
%\frac{1}{d} \sum_j \sum_k \sum_\ell %\alpha_k \omega^{\ell k}\ket{\ell (j-k)j} 
%\]
Now fix an $\ell$ and $s \equiv j - k \pmod{d}$. Then, collecting terms we see that the term containing $\ket{\ell s}$ is:
%\[
%\frac{1}{d} \left( \ket{\ell s} \sum_j %\alpha_{j-s} \omega^{\ell(j-s)} \ket{j} + \cdots \right) 
%\]
\[
\frac{1}{d\sqrt{d}} \left( \ket{\ell s}_{T S} \sum_j \sum_k   \omega^{\ell k}\ket{k}^{\otimes (n-1)}_{R}   \ket{j}_{ C} + \ldots \right)
\]

Thus after measuring the servers qudits $T$ and $S$, if the outcome is $\ell$ and $s$, the whole state with the rest of the initial state and the qudit $C$ becomes:
\[
\frac{1}{\sqrt{d}} \sum_j \sum_{k} \omega^{\ell k} \ket{k}^{\otimes (n-1)}_{R}   \ket{j}_{C}
\]
The server conveys the result to the client via the classical channel after which the client can apply the correction operator $Z^{-\ell}X^{-s}$ and the state  becomes:

%Thus, if the server measures its qudits, and gets the outcome $\ket{\ell s}$, the player has the state:
%\[
%\sum_j \alpha_{j-s} \omega^{\ell(j-s)} \ket{j}.
%\]

%Once the server communicates $\ket{\ell s}$ to the player, through a classical channel, the client can then apply the operator $Z^{-\ell}X^{-s}$ to obtain:
%\begin{align*}
%   Z^{-\ell}X^{-s}  \sum_j \alpha_{j-s} \omega^{\ell(j-s)} \ket{j} &= \sum_j \alpha_{j-s} \omega^{\ell(j-s)} Z^{-\ell} X^{-s} \ket{j} \\
%  &= \sum_j \alpha_{j-s} \omega^{\ell(j-s)} Z^{-\ell} \ket{j-s}\\
%   &= \sum_j \alpha_{j-s} \omega^{\ell(j-s)} \omega^{-\ell(j-s)} \ket{j-s}\\
%   &= \sum_j \alpha_{j-s} \ket{j-s} = \ket{\phi}
%\end{align*}

\begin{align*}
   & Z^{-\ell}X^{-s} \frac{1}{\sqrt{d}} \sum_j \sum_{k} \omega^{\ell k} \ket{k}^{\otimes (n-1)}_{R}   \ket{j}_{C} \\
   &= \frac{1}{\sqrt{d}} \sum_j \sum_{k} \omega^{\ell k} \ket{k}^{\otimes (n-1)}_{R}  Z^{-\ell}X^{-s}  \ket{j}_{C} \\
  &= \frac{1}{\sqrt{d}} \sum_j \sum_{k} \omega^{\ell k} \ket{k}^{\otimes (n-1)}_{R}  Z^{-\ell} \ket{j-s}_C\\
  &= \frac{1}{\sqrt{d}}  \sum_{k} \omega^{\ell k} \ket{k}^{\otimes (n-1)}_{R}  Z^{-\ell} \ket{k}_C\\
   &= \frac{1}{\sqrt{d}}  \sum_{k} \omega^{\ell k} \ket{k}^{\otimes (n-1)}_{R} \omega^{-\ell k} \ket{k}_C\\
   &= \frac{1}{\sqrt{d}}  \sum_{k} \ket{k}^{\otimes (n-1)}_{R} \ket{k}_C\\
   &=\ket{\psi_0} 
\end{align*}

Notice that through this circuit one qudit gets teleported to the target client (with whom that particular Bell pair is shared) and an entanglement swapping happens between the original qudit of the initial state and the local qudit of the client. The server will perform this process for all other qudits in the GHZ state. Thus the whole initial state remains unchanged as we can see with the calculation above, while the qudits get distributed among clients. 

\descr{Local Operations for a Client.} The local operations done by each client on his/her qudit $\ket{\phi_i}$ are shown in Figure~\ref{fig:local-operations}. These are self-explanatory from the discussion thus far.

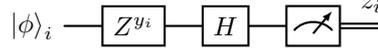
\begin{figure}
\begin{center}
\begin{quantikz}
\lstick{$\ket{\phi}_i$} & \gate{Z^{y_i}} & \gate{H} & \meter[label style={right = 0.5cm}]{z_i} & \setwiretype{c} 
\end{quantikz}
\end{center}
    \caption{Local operations for player $i$ having qudit $\ket{\phi}_i$ of the GHZ state $\ket{\psi_0}$ with differentially private output $y_i$.}
    \label{fig:local-operations}
\end{figure}

\subsection{Local Differential Privacy via Quantum Noise}
We have assumed that the differentially private algorithm is applied classically to each user's classical input. For completeness, we show that we can also implement the algorithm via a quantum circuit. Assume that the user's raw input is $x \in \{0, 1, \ldots, \kappa-1\}$. Note that $d > (\kappa-1)n$. We encode this classical input as a qudit $\ket{x}$. Let us calculate the probability that after the application of the LDP Algorithm~\ref{algo:bounded-noise}, the output of the player remains the same:
\begin{align*}
    \Pr[\ket{x} \rightarrow \ket{x}] &= \Pr[\ket{x} \rightarrow \ket{x} | b = 0] \Pr[b = 0] \\
    &+ \Pr[\ket{x} \rightarrow \ket{x} | b = 1] \Pr[b = 1] \\
    &= 1 \cdot (1 - \gamma) + \frac{1}{\kappa} \cdot \gamma = 1 - \frac{\kappa-1}{\kappa}\gamma.
\end{align*}
Let $p = 1 - (\kappa-1)\gamma/\kappa$. Let us also calculate the probability of the event $\ket{x} \rightarrow \ket{y}$, where $y \neq x$:
\begin{align*}
    \Pr[\ket{x} \rightarrow \ket{y}] &= \Pr[\ket{x} \rightarrow \ket{y} | b = 0] \Pr[b = 0] \\
    &+ \Pr[\ket{z} \rightarrow \ket{y} | b = 1] \Pr[b = 1] \\
    &= 0 \cdot (1 - \gamma) + \frac{1}{\kappa} \cdot \gamma = \frac{\gamma}{d} = \frac{1-p}{\kappa-1}.
\end{align*}

%quantum mechanica 
%Of course, we do not stop here. Just because the protocol is traceless, does not mean the server cannot learn the bit of a participant (e.g., if the server knows none of the other players sent a bit). We would like to use local quantum operations that would add noise into the bits sent by the players. The simplest case is when each player either sends a 0 or a 1. That is the domain is $\{0, 1\}$. In this case, a simple protocol samples a bit $b$ from a Bernoulli distribution with parameter $\gamma$. If $b = 0$, player answers truthfully. Otherwise, the player answers with a random bit~\cite[\S 3.1]{privacy-blanket-shuffle}. Suppose player $i$ has the qubit $\ket{\phi_i}$ which is either $\ket{0}$ or $\ket{1}$. Let us calculate the probability that $\ket{0}$ is transformed to $\ket{1}$. Denote the event by $\ket{0} \rightarrow \ket{1}$. Then:
% \begin{align*}
%     \Pr[\ket{0} \rightarrow \ket{1}] &= \Pr[\ket{0} \rightarrow \ket{1} | b = 0] \Pr[b = 0] + \Pr[\ket{0} \rightarrow \ket{1} | b = 1] \Pr[b = 1] \\
%     &= 0 \cdot (1 - \gamma) + \frac{1}{2} \cdot \gamma = \frac{\gamma}{2}.
% \end{align*}
% Let us also calculate the probability of the event $\ket{1} \rightarrow \ket{1}$:
% \begin{align*}
%     \Pr[\ket{1} \rightarrow \ket{1}] &= \Pr[\ket{1} \rightarrow \ket{1} | b = 0] \Pr[b = 0] + \Pr[\ket{1} \rightarrow \ket{1} | b = 1] \Pr[b = 1] \\
%     &= 1 \cdot (1 - \gamma) + \frac{1}{2} \cdot \gamma = 1 - \frac{\gamma}{2}.
% \end{align*}
% Let $p = 1 - \frac{\gamma}{2}$. 
Then the following operation on $\ket{x}$ exactly captures these transformations:
\[
\sqrt{p}I + \sqrt{\frac{1 - p}{\kappa-1}} X + \sqrt{\frac{1 - p}{\kappa-1}}X^2 + \cdots + \sqrt{\frac{1 - p}{\kappa-1}} X^{\kappa-1}.
\]
This is precisely the dit flip channel~\cite{Dutta_2023}. The whole circuit to implement the channel and then measure the outcome can be done as: 

\begin{center}
\begin{quantikz}
\lstick{$\rho$} & \gate{CX} & \meter[label style={right = 0.5cm}]{y'} & \setwiretype{c} & \rstick{${y \equiv y' \pmod{\kappa}}$} \\ 
\lstick{$\rho_e$} & \ctrl{-1} & \meter{} & 
\end{quantikz}
\end{center}

where
\[
\rho_e = p \ket{0}\bra{0} + \frac{1 - p}{\kappa-1} \ket{1}\bra{1} + \cdots +  \frac{1 - p}{\kappa-1} \ket{\kappa-1}\bra{\kappa-1}.
\]
To see this, note that $\rho_e$ is a mixture state~\cite[\S 2.4.1]{nielsen2002quantum}. With probability $p$ it leaves the initial state unchanged. And with probability $(1-p)/(\kappa-1)$ it acts as a control for the $CX$ gate on $\rho$, where $X = \sum_{j = 0}^{d-1} \ket{j+1}\bra{j}$, and $X^{-1} = X^\dagger = \sum_{j = 0}^{d-1} \ket{j}\bra{j+1}$. This means that the state of the system after the $CX$ gate is:
\begin{align*}
    \rho' &= p\rho \otimes \ket{0}\bra{0} + \frac{(1-p)}{\kappa-1} X \rho X^{-1} \otimes \ket{1}\bra{1} + \\
    &\cdots + \frac{(1-p)}{\kappa-1} X^{\kappa-1} \rho X^{-(\kappa-1)} \otimes \ket{\kappa-1}\bra{\kappa-1}.
\end{align*}
Fix a measurement result ${j}$ of the computational basis, where $j \in \{0, 1, \ldots, d-1\}$. Then with probability $\text{tr} ((I \otimes \ket{j}\bra{j}) \rho')$, the state after measurement of the environment is:
\begin{align*}
    \rho'' = \frac{(I \otimes \ket{j}\bra{j}) \rho' (I \otimes \ket{j}\bra{j})}{\text{tr} ((I \otimes \ket{j}\bra{j}) \rho')}
\end{align*}
Initially we had $\rho = \ket{x} \bra{x}$, where $x \in \{0, 1, \ldots, \kappa-1\}$.
This implies that the outcome $j = 0$ occurs with probability $p$, leaving the state as $\rho'' = \frac{p \rho}{p} = \rho = \ket{x} \bra{x}$, whereas the outcomes $1 \le j \le \kappa-1$ each occur with probability $(1-p)/(\kappa-1)$, leaving the state as $\rho'' = X^j \rho X^{-j} = \ket{x + j}\bra{x + j}$, and any outcome $\kappa \le j \le d-1$ does not occur (probability 0). Now  measuring the state $\rho''$, again in the computational basis, gives the outcome $y' = x + j$ with $0 \le j \le \kappa-1$. Since $d > (\kappa-1)n$, and $n$ is at least $2$, we see that the maximum possible value of $x + j$, i.e., $2(\kappa-1)$, is less than $d$. Therefore, we can uniquely obtain $y \equiv y' \pmod{\kappa}$ as the correct outcome of the LDP mechanism. This value can then be used in Step 4 of the protocol.

\section{Fault-Tolerant Computation}
\label{sec:fault-tolerance}
In this section, we describe how the computation in the protocol can be done in a fault-tolerant manner. We begin with a review of stabilizer codes.  
% \begin{enumerate}
%     \item Fault-tolerant preparation of the initial GHZ state given in Eq.~\ref{eq:ghz}.
%     \item Fault-tolerant version of the circuit (shown in Figure~\ref{fig:teleport}) for teleporting a qudit from this state to a party. 
%     \item Fault-tolerant application of $Z$ gates and subsequent measurements by each party. 
% \end{enumerate}

\subsection{Stabilizer Codes}
\label{subsec:stabilizer-codes}
Quantum error-correction is most conveniently expressed in the stabilizer formalism. In particular, we will be using qudit stabilizer codes~\cite{Bullock_2007}. We now assume that $d$ is a prime. Note that this does not limit the application of our protocol, since our condition is only that $d > (\kappa - 1)n$, and hence any prime $d$ greater than this quantity can be chosen. An $[N, K]$-stabilizer code encodes $K$ logical qudits into $N$ physical qudits. We assume that the code can correct errors on up to $T$ encoded qudits. In the simplest case it can correct arbitrary errors on a single encoded qudit. Quantum errors can be modeled as a quantum operation which transforms a given state to another. Quantum operations themselves can be expressed as a sum of (Krauss) operators. A remarkable result in quantum-error correcting codes states that if an error-correcting code satisfies a discrete set of quantum-error correcting conditions with respect to the Krauss operators of the quantum error operation, then it can correct arbitrary errors. Since any operator can be written as a linear combination of Pauli matrices: $I$, $X$, $Z$ and $iXZ$, for $d = 2$, it follows that if an error-correcting code can correct these errors on say one qubit, it can correct arbitrary errors on that qubit. These results generalize to qudits with some modifications.

We consider the generalized Pauli group $\mathcal{P}^n_d$ of $n$-fold tensor products of the qudit operators $I$, $X$ and $Z$ (as defined earlier). For $k, \ell \in \mathbb{Z}_d$, these operators satisfy the commutation relation 
$X^k Z^l = \omega^{-kl}Z^l X^k$. For $n$-fold tensor product of these, we get: 
\begin{equation}
\label{eq:pauli-group}
  \mathcal{P}^n_d = \{\omega^jX^{\otimes \mathbf{k}}Z^{\otimes \mathbf{l}}; \mathbf{k},\mathbf{l}\in\mathbb{Z}_d^N , j\in \mathbb{Z}_d \}.  
\end{equation}
A qudit stabilizer group is a subgroup $S \subseteq \mathcal{P}^n_d$. The code subspace of $S$ consists of all vectors $\ket{\psi} \in (\mathbb{C}^d)^{\otimes N}$ which are stabilized by all elements of $S$, meaning, for all $s \in S$, we have $s \ket{\psi} = \ket{\psi}$. In other words, these belong to the $+1$ eigenstate of $s$. Let us denote this code space by $V_S$. For $V_S$ to be non-trivial we must have that $S$ is an Abelian group and $\omega^j I$ for $j \neq 0$ should not be in $S$~\cite{kawabata2023narain}. 
The subgroup $S$ can be succinctly represented by only $N-K$ generators, $g_1, \ldots, g_{N-K}$, which are themselves members of $\mathcal{P}^n_d$, and we write $S = \langle g_1, \ldots, g_{N-K}\rangle$. To express the dynamics of the logical (encoded) qudits, we have the logical operators $\bar{X}_1, \ldots, \bar{X}_K$ and $\bar{Z}_1, \ldots, \bar{Z}_K$, which act as the logical $X$ and $Z$ operators on the encoded qudits (one for each of the $K$ encoded qudits). These logical operators commute with the generators of $S$, and anti-commute with each other.  
An error operator $E$ is defined as an element of $\mathcal{P}^n_d$ which is a $T$-fold tensor product of $I$, $X$ and $Z$. In other words, it acts on at most $T$ qudits. The error $E$ is correctable if it anti-commutes with at least one generator of $S$. Suppose $g$ is one such generator. Then
\[
g E\ket{\psi} = \omega^j Eg\ket{\psi} = \omega^j E\ket{\psi}
\]
for some $j \neq 0$. This implies that the vector $E\ket{\psi}$ is in the $\omega^j$-eigenstate of $g$. If we measure $g$, the eigenvalue $\omega^j$ can thus be regarded as the syndrome of this error. Note that $g$ is a normal operator as it is a member of the Pauli group, and hence is an observable according to our definition in Section~\ref{sec:background}. This means that we can use it as a measurement operator. One can thus remove the error by applying $E^\dagger$ to $E\ket{\psi}$. If $U$ is any unitary operator acting on a qudit $\ket{\psi}$ of $V_S$, then we see that
\[
U \ket{\psi} = U g \ket{\psi} =UgU^\dagger U \ket{\psi},
\]
where $g$ is any generator of $S$. Thus $UgU^\dagger$ stabilizes $U\ket{\psi}$. It follows that the set $\{Ug U^\dagger\}$ for $g \in S$ is the set of generators of the codespace $UV_S$. $U$ is a member of the \emph{Clifford group} if for all $P \in \mathcal{P}^n_d$ we have $UPU^\dagger \in  \mathcal{P}_d$. Obviously $X$ and $Z$ are members of the Clifford group. Proposition~\ref{prop:ZX-interchange} shows that so is $H$. One can similarly show that the $CX$ gate also belongs to the Clifford group~\cite{gheorghiu2014standard}. Looking at all the circuits used in our protocol: Figures~\ref{fig:bell-state}, \ref{fig:ghz-state}, \ref{fig:teleport} and \ref{fig:local-operations}, we see that we only use gates from the Clifford group. Furthermore, initial states in our circuits are obtained only via tensor products of $\ket{0}$'s. Given an $[N, K]$-stabilizer code which corrects up to $T$ errors, we then have the following recipe to achieve fault-tolerance
\begin{enumerate}
    \item Fault-tolerant encoding of the initial state $\ket{0}^K$ to $\ket{0}^N$.
    \item Fault-tolerant versions of the (logical) Clifford gates $Z$, $X$, $H$ and $CX$ on the encoded qudits.
    \item Fault-tolerant measurement of the encoded state.
\end{enumerate}
All this can be done in the stabilizer framework. For the first step, we simply measure all the generators to identify any errors in the initialization, and correct them according to the syndrome as discussed above. For step (2), we simply update the set of generators $\{g_i\}$ as $\{Ug_iU^\dagger\}$, which remains in the Pauli group for Clifford gates. Any errors in the circuit can again be removed by measuring the generators. For the last step, we need to be able to implement computational basis measurements, i.e., $\bar{Z}$, in a fault-tolerant way. This can be done if the generators can be measured via a fault-tolerant procedure since $\bar{Z}$ is also part of the Clifford group. It turns out that the generators of the stabilizer code can be measured fault-tolerantly without disturbing the state~\cite{Bullock_2007}. As long as the errors are confined to at most $T$ qudits, the circuit will be fault-tolerant. In other words, we define a circuit to be fault-tolerant to $T$ failures if $T$ failures cause at most $T$ errors in an encoded block of qudits. Thus, we can achieve fault-tolerance if we have an example of such a stabilizer code. For more details in stabilzer codes, see~\cite[\S 10.5]{nielsen2002quantum}. 
Next we describe an example of a stabilizer code which can be used to implement these fault-tolerant operations. 

\subsection{Surface Codes}
\label{subsec:surface-codes}
%The initial generalized GHZ state can be prepared in a closed system and then it can be teleported quditwise to different parties. But teleportation of each qudit may not happen at the same time hence there is a possibility for logical errors in the meantime. So we need to encode the GHZ in quantum error correction codes to make it fault tolerant. 
One of the best examples of stabilizer codes to encode a large system is surface codes~\cite{bravyi1998sc, PhysRevA.86.032324}. The earlier surface codes for qubit systems have already been generalized for qudit systems, see for example~\cite{Bullock_2007, hussainanwar}. The surface code consists of an $L\times L$ lattice, each vertex of which represents a physical qudit. The surface code is stabilized by the stabilizer group $S = \langle A_s, B_p \rangle$ defined as: 
\begin{align}
\label{eq:sc-generators}
    A_s &= X_e \otimes X^{-1}_f \otimes X^{-1}_g \otimes X_h, \; \forall \text{ vertices } s \in V \nonumber\\
B_p &= Z_a \otimes Z_b \otimes Z^{-1}_c \otimes Z^{-1}_d, \; \forall \text{ faces }p\in P
\end{align}
where $e, f, g, h$ are edges surrounding a vertex $s\in V$ and $a, b, c, d$ are the edges surrounding a face (plaquette) $p \in P$, in the zigzag order $\lightning$. The lattice together with two example generators are shown in Figure~\ref{fig:surface-code} which is derived from a similar figure in~\cite{watson}. 

\begin{figure}
\centering
\begin{tikzpicture}
% grid
\draw (0.5,0) grid (4.5,4);
\draw[line width=3pt, line cap=round, dash pattern=on 0pt off 1cm](0.5,0.5) grid (4.5,4);
\draw[line width=3pt, line cap=round, dash pattern=on 0pt off 1cm](0.5,0) -- (4.5,0);

% faces
\node [fill=blue!20, draw=none, minimum size=0.98cm] 
      at (3.5,3.5) {};

\node[] at (3.5, 3.5) {$B_p$};

\node[draw=blue!80,circle,minimum size=0.2cm,inner sep=1pt, fill=blue!20] at (3.5,4) {\scriptsize $a$};
\node[draw=blue!80,circle,minimum size=0.2cm,inner sep=1pt, fill=blue!20] at (3,3.5) {\scriptsize $b$};
\node[draw=blue!80,circle,minimum size=0.2cm,inner sep=1pt, fill=blue!20] at (4,3.5) {\scriptsize $c$};
\node[draw=blue!80,circle,minimum size=0.2cm,inner sep=1pt, fill=blue!20] at (3.5,3) {\scriptsize $d$};

% vertices

\node [fill=red!20, rotate=45, draw=none, minimum size=0.8cm] 
      at (4,2) {};

\node[] at (4, 2) {$A_s$};

\node[draw=red!80,circle,minimum size=0.2cm,inner sep=1pt, fill=red!20] at (4,2.5) {\scriptsize $e$};
\node[draw=red!80,circle,minimum size=0.2cm,inner sep=0.7pt, fill=red!20] at (3.5,2) {\scriptsize $f$};
\node[draw=red!80,circle,minimum size=0.2cm,inner sep=1pt, fill=red!20] at (4.5,2) {\scriptsize $g$};
\node[draw=red!80,circle,minimum size=0.2cm,inner sep=1pt, fill=red!20] at (4,1.5) {\scriptsize $h$};

% logical X

\draw[dashed, red!80] (1.5,4) -- (1.5,0);

\node[draw=red!80,circle,minimum size=0.2cm,inner sep=1pt, fill=red!20] at (1.5,4) {\scriptsize $j$};

\node[draw=red!80,circle, minimum size=0.2cm,inner sep=1pt, fill=red!20] at (1.5,3) {\scriptsize $j$};

\node[draw=red!80,circle,minimum size=0.2cm,inner sep=1pt, fill=red!20] at (1.5,2) {\scriptsize $j$};

%\node[draw=red!80,circle,minimum size=0.2cm,inner sep=1pt, fill=red!20] at (1.5,1) {\scriptsize $j$};

\node[draw=red!80,circle,minimum size=0.2cm,inner sep=1pt, fill=red!20] at (1.5,0) {\scriptsize $j$};

\node[] at (5, 1) {$\bar{Z}^k$};

% logical Z

\draw[dashed, blue!80] (0.5,1) -- (4.5,1);

\node[draw=blue!80,circle,minimum size=0.2cm,inner sep=1pt, fill=blue!20] at (0.5,1) {\scriptsize $k$};

%\node[draw=blue!80,circle,minimum size=0.2cm,inner sep=1pt, fill=blue!20] at (1.5,1) {\scriptsize $k$};

\draw[red!80, fill=red!20] (1.5,1) + (0, 0.2) arc (90:270:0.2cm);

\draw[blue!80, fill=blue!20] (1.5,1) + (0, -0.2) arc (270:450:0.2cm);

\node[] at (1.5,1) {\scriptsize $j \; k$};

\node[draw=blue!80,circle,minimum size=0.2cm,inner sep=1pt, fill=blue!20] at (2.5,1) {\scriptsize $k$};

\node[draw=blue!80,circle,minimum size=0.2cm,inner sep=1pt, fill=blue!20] at (3.5,1) {\scriptsize $k$};

\node[draw=blue!80,circle,minimum size=0.2cm,inner sep=1pt, fill=blue!20] at (4.5,1) {\scriptsize $k$};

\node[] at (1.5, -0.5) {$\bar{X}^j$};

\end{tikzpicture}
 \caption{An example of a surface code derived from~\cite{watson}. Qudits are black dots. The face labelled $B_p$ is an example stabilizer generator with the $Z$ and $Z^{-1}$ operators shown in Eq.~\eqref{eq:sc-generators}. The ``diamond'' labelled $A_s$ is an example stabilizer generator with the $X$ and $X^{-1}$ operators shown in Eq.~\eqref{eq:sc-generators}. The blue dots are for face operators $Z$ and red dots for vertex operators $X$. Logical $\bar{X}^j$ operation is indicated by applying $X^j$ to each of the qudit in the highlighted vertical line (indicated by $j$). The logical $\bar{Z}_k$ operation is shown likewise on the horizontal line (indicated by $k$).}
\label{fig:surface-code}
\end{figure}
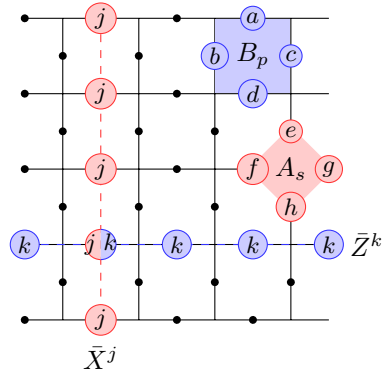

%Let us consider the Pauli group $\mathcal{P}_d$ generated by $\langle X,Z \rangle$
%\[
%X = \sum_{j \in \mathbb{Z}_d} %\ket{j\oplus1}\bra{j},\;
%Z = \sum_{j \in \mathbb{Z}_d} \omega^j %\ket{j}\bra{j}
%\]
%where $X$ and $Z$ are the generalized operators for qudit. For $d>2$ they are not Hermitian anymore but they still have the eigenvalues $\omega^k$ for $k\in \mathbb{Z}_d$. For some $k$,$l \in j \in \mathbb{Z}_d$ it has the commutation relation as: $X^k Z^l = \omega^{-kl}Z^l X^k$. Thus for $n$ qudit system the Pauli group will be:
%\[
%\mathcal{P}_d^n = \{\omega^jX^{\otimes k}Z^{\otimes l}; k,l\in\mathbb{Z}_d^n , j\in \mathbb{Z}_d \}
%\]
%The stabilizer group is subgroup of the
%se Pauli group defined as: $S \subseteq \mathcal{P}_d$ such that the code subspace of any stabilizer $s\in S$ is the joint $+1$ eigenspace. 

For all $s, p \in V, P$, these generators commute. To see this, note that they trivially commute if they do not have any vertices in common (the tensor product is over different qudits). Otherwise, they can have exactly two vertices in common. For instance. Let us assume that $s$ is the vertex with top and left edges $c$ and $d$ common with $B_p$ in the figure. Let us call the remaining edges of $s$ as $e$ and $f$. Then:
\begin{align*}
   A_sB_p &= (X_c \otimes X^{-1}_d \otimes X^{-1}_e \otimes X_f)(Z_a \otimes Z_b \otimes Z^{-1}_c \otimes Z^{-1}_d) \\
   &= Z_a \otimes Z_b \otimes X_cZ_c^{-1} \otimes X_d^{-1}Z_d^{-1} \otimes X_e^{-1} \otimes X_f\\
   &=  Z_a \otimes Z_b \otimes \omega^{-(1)(-1)} Z_c^{-1} X_c \otimes \omega^{-(-1)(-1)} Z_d^{-1}X_d^{-1} \otimes X_e^{-1} \otimes X_f\\
   &= \omega^0 Z_a \otimes Z_b \otimes  Z_c^{-1} X_c \otimes  Z_d^{-1}X_d^{-1} \otimes X_e^{-1} \otimes X_f\\
   &= (Z_a \otimes Z_b \otimes Z^{-1}_c \otimes Z^{-1}_d) (X_c \otimes X^{-1}_d \otimes X^{-1}_e \otimes X_f)\\
   & = B_p A_s
\end{align*}

The two logical operators $\bar{X}^j$ and $\bar{Z}^k$ are also shown in Figure~\ref{fig:surface-code}, which apply the $X^j$ and $Z^k$ operators on each of the qudits along the vertical and horizontal lines, respectively. These can easily be seen to anti-commute with one another, as they have only one vertex in common. Furthermore, they commute with all the generators. The generator $A_s$ (respectively $B_p$) trivially commutes with $\bar{X}_j$ (respectively $\bar{Z}_k$),  and each generator $A_s$ (respectively, $B_p$) has two vertices in common with $\bar{Z}^k$ (respectively, $\bar{X}_j$). This lattice is used to encode one qudit. This completes the description of the surface code as a stabilizer code. Thus, we can use it to encode each qudit in our protocol, and then implement the encoded versions of the gates $Z$, $X$, $H$ and $CX$ as discussed in Section~\ref{subsec:stabilizer-codes}. In Appendix~\ref{app:lattice-surgery} we briefly describe another quantum error correcting method based on lattice surgery. 

\descr{Noise Model and Decoding.} For completeness, we can consider the \emph{independent noise model} from~\cite{watson}. 
In their noise model $X^k$ and $Z^k$ errors can occur to a data qudit with probability $p/(d-1)$ independently for a parameter $p$ and $1\leq k\leq (d-1)$. The error threshold is an upper bound such that any quantum error correction code with error probability per component below this threshold decreases the error rate on the encoded qudits. The error threshold defined by~\cite{watson}, denoted $p_{\text{th}}^d$, for their decoding algorithm increases with the dimension $d$ and saturates around $8.3\%$ in the case of an $L\times L$ lattice. This means the logical error rate will decrease with increasing code distance if the probability of physical error rate is below $8.3\%$. 

%In our system we are going to use the decoder algorithm described in \cite{watson}. In this paper the authors proved that for a noise model where an $X^k$ and $Z^k$ errors can occur to a data qudit with probability $p/(d-1)$ independently for a parameter $p$ and $1\leq k\leq (d-1)$, the error threshold ($p_{th}^d$ increases with the dimension $d$ and saturates around $8.3\%$ in the case of an $L\times L$ lattice. It means the logical error rate will decrease with increasing code distance if the probability of physical error rate is below $8.3\%$. 

\descr{Bell State Channels.} An aspect missing in the discussion above is the creation of quantum channel through the generalized Bell pairs (see Figure~\ref{fig:channels}). This can be done fault-tolerantly through \emph{quantum repeaters}~\cite{RevModPhys.95.045006, 7010905, Rohde_2021}. Quantum repeaters segment a network link, apply entanglement between segments and connect them to achieve long-range end-point entanglement. 

%The whole lattice will represent or encode one logical qudit. Similarly like qubit surface codes there are logical operators defined here for tackling additional degree of freedom. These logical operators are logical $\bar{X}$ that connects two opposite smooth edges and logical $\bar{Z}$ connecting two opposite rough edges. For illustration we will use the \textit{FIG 2} of Watson et al. 2015 \cite{watson} as Figure~\ref{fig:surface-code}.

% \begin{figure}
% \begin{center}
%     \includegraphics[width=0.5\textwidth]{origin_image.pdf}
% \end{center}

%     \caption{"An example of a distance 5 surface code.
% Qudits are shown as black dots, arranged on the edges of a lattice
% with two types of boundary: rough and smooth. For clarity, when
% an arbitrary $X^j$ or $Z^k$ Pauli operator acts on a physical qudit, we
% only include the exponents $j$ and $k$ on the edges of the figure. We
% use red for $X^j$ errors and vertex operators, and blue for $Z^k$ errors or
% plaquette operators. (a) and (b) An example of a single plaquette and
% vertex operator, respectively. (c) An example of a deformed rough
% edge plaquette operator (three-body operator). Note that the vertex
% operators are deformed at smooth edges. (d) and (e) An example of a
% pair of anticommuting logical operators." \cite{watson}}
% \label{fig:expsurf}
% \end{figure}

\section{Physical Realization of Our Protocol}
\label{sec:physical}

\descr{Experimental Systems.} Some of the essential ingredients of our protocol have already been achieved experimentally. Anonymous broadcasting has been demonstrated in an eight node network using photon polarization entangled qubits \cite{Huang2022ExpAB}, and deterministic synthesis of multi-partite entangled states of higher dimensional systems has been shown \cite{edmunds2024constructingspin1haldanephase}.
We note that an alternative to using high dimensional quantum spins is to use continuous variable systems like temporal-\cite{Men2011Temp} or frequency-\cite{Wang2014Freq} modes of light. In~\cite{Men2018anon}, it was shown that anonymous broadcasting can be performed using a continuous variable surface code as a resource. Such a state can be prepared using Gaussian operations where the limit on the effective local dimension for each party is in principle unbounded but in practice is constrained by the amount of squeezing available in the state preparation. There, a real valued message can be sent anonymously with a channel capacity given by $C=\frac{1}{2}\log(1+\alpha)$ where the signal-to-noise ratio is $\alpha=2\tau^2 s^2$ with $\tau^2$ being the variance of the message to be broadcast and $s$ the squeezing parameter required for the state preparation. While this version of the protocol is not strictly fault tolerant, it can be done in an error mitigated way by allowing the parties access to local bosonic reservoirs that continously cool the state close to the code subspace \cite{Men2018anon}.   

\descr{Qudit Systems.} Compared to a qubit system, a qudit system has many advantages. One such advantage is a much larger state space. This has for instance enabled us to propose our protocol for $\kappa$-ary randomized response for any $\kappa \ge 2$, as we can accomodate the sum of all inputs from clients as long as they are less than $d$, which can be chosen as a large enough prime.  This is also one of the reasons that earlier works on anonymous broadcast in a quantum setting built on qubit systems were limited to sending bits~\cite{qa-transmissions}. Other advantages include simpler circuits and more efficient algorithms~\cite{wang2020qudit-systems}. It is no wonder then that considerable effort is being put in place to make qudit systems physically realizable. 

Many physical systems that are used to implement the qubit system already have more than two states, such as the frequency of photons, which can be utilized for qudit systems~\cite{wang2020qudit-systems}. Recent results have successfully reconstructed density matrices for up to $d = 8$ using biphoton frequency combs~\cite{lu2022bayesian}. Physicists have also been able to create a two-qudit entangled gate for up to $d = 5$ using trapped ions~\cite{hrmo2023native}. The work in~\cite{erhard2020advances} experimentally demonstrates a proof-of-concept qudit-based quantum processor for $d = 4$ using photonic systems. These and many other experimental efforts indicate that higher-dimensional qudit systems will be a reality within a timeframe not too distant from the realization of high-dimensional qubit systems. 

\section{Related Work}
\label{sec:rw}
Our protocol bears resemblance to the e-voting protocol from~\cite{hillery2006voting} as well as the protocol from~\cite{qa-transmissions} for anonymously broadcasting a single bit to a group of users. The e-voting protocol~\cite{hillery2006voting} considers the binary case, i.e., $\kappa = 2$, each voter either sends 0 or 1 as his/her vote. The aggregator needs to count the number of yes votes. The protocol does not include differential privacy, either local or in the shuffle model. Furthermore, they do not consider how their protocol can be implemented fault tolerantly or how the initial state can be transported to the voters. The protocol from~\cite{qa-transmissions} is for sending a broadcast bit anonymously to a group of users. Their setting is different to ours as there is no central aggregator. Furthermore, their target is to solve the dining cryptographer's problem in the quantum setting, and hence does not involve differential privacy. For multiple senders, they introduce a more complicated protocol involving collision detection. In our case, we use qudits to resolve the issue of multiple senders and non-binary messages. Once again, being an earlier work, they do not consider fault tolerant implementation of their protocol. A more recent protocol~\cite{lipinska2018wstate} looks at sending a quantum message from a sender to a receiver in a manner that their identities remain anonymous to the network of $N$ nodes. To achieve this they use the so-called $W$ state: $\frac{1}{\sqrt{N}}\sum_{j=1}^N\ket{j}$, where $j \in \mathbb{Z}_N$ is the binary vector with the solitary one in the $j$th location. Once again their setting, goal and protocol is different to ours. 

There have been a number of works that discuss differential privacy in the quantum setting. They primarily focus on definitional aspects of applying differential privacy to quantum information, and its properties as well as realizing differentially private noise via noisy quantum channels~\cite{zhou2017dp-quantum, hirche2023quantum, aaronson2019gentle}. In our work, we only apply differential privacy to classical information, i.e., states that can be represented as $\rho = \ket{j}\bra{j}$, where $\ket{j}$ is a qudit, whereas these works consider more general quantum information, e.g., a superposition of qudits. For classical information these definitions are equivalent to classical differential privacy (Section~\ref{subsec:dp}). The definition in~\cite{zhou2017dp-quantum} and \cite{hirche2023quantum} defines quantum differenital privacy in terms of trace distance between two quantum states $\rho$ and $\sigma$, given as $\frac{1}{2}\text{tr}(\sqrt{(\rho - \sigma) (\rho - \sigma)^\dagger})$. Denoting $\rho = \ket{j}\bra{j}$ and $\sigma = \ket{k}\bra{k}$, we see that $\sqrt{(\rho - \sigma) (\rho - \sigma)^\dagger} = \sqrt{\ket{j}\bra{j} + \ket{k}\bra{k}} = \ket{j}\bra{j} + \ket{k}\bra{k}$, so that the trace distance becomes 1. Hence the resulting definition is equivalent to the classical definition. The definition from~\cite{aaronson2019gentle} defines two quantum states $\rho$ and $\sigma$ as being neighbors if we can reach $\rho$ from $\sigma$ or $\sigma$ from $\rho$ with a quantum operation on a single register only (in our case, a single qudit). With $\rho$ and $\sigma$ as defined above, we see that we have $X^\ell \rho X^{-\ell} = \sigma$ and $X^{-\ell} \sigma X^\ell = \rho$, where $\ell \equiv k - j \pmod{d}$. This satisfies the definition of a general quantum operation in~\cite{aaronson2019gentle}, and hence the differential privacy definition is again equivalent. It is due to this reason that we do not give a separate definition of quantum differential privacy in our paper. 

Lastly, there are a growing number of works exploring applications of differentially private algorithms and protocols in the quantum setting. The work in~\cite{guan2023violations} proposes methods to detect violations of differential privacy for quantum algorithms. These methods provide a counterexample of a pair of quantum states which breach privacy, revealing the cause of the breach. Such empirical methods are important for practical applications of differential privacy. Researchers have also looked at making quantum machine learning differentially private~\cite{watkins2023qml-dp}. However, this work looks at running a quantum machine learning algorithm over a central dataset, as opposed to our distributed case.

\section{Conclusion and Future Directions}
\label{sec:conclude}
We have presented a quantum protocol for differential privacy in the shuffle model for the $\kappa$-ary randomized response algorithm. The key advantage of the quantum approach is that we can implement the shuffle using properties of quantum entanglement without requiring additional mechanisms and trust assumptions to implement the shuffle as in the classical setting. A key feature of our protocol is that it can be implemented only using Clifford gates, which are easy to implement using quantum error correction schemes, and make the protocol highly efficient as they can be simulated efficiently using a classical computer. We therefore describe how our protocol can be implemented fault-tolerantly, which is suited to the current noisy intermediate-scale quantum (NISQ) era. There are a number of ways in which the current protocol can be improved. One direction is to consider weaker threat models, where clients could collude and/or be malicious. We note that such threat models are quite challenging even for differential privacy in the classical setting. Another direction is to expand the protocol to consider more general quantum states rather than classical states as is considered in our paper. A third direction is to construct protocols for local differential privacy other than randomized response, such as the Laplace mechanism, and for more sophisticated aggregate functions other than summation.

\section*{Acknowledgments}
This work was funded in part by a Future Communications Research Centre grant from Macquarie University. G.K.B. acknowledges support from the Australian Research Council Centre of Excellence for Engineered Quantum Sys- tems (Grant No. CE 170100009). 

\bibliography{quantum-ref.bib}

\appendix

\section{De-Biased Sum}
\label{app:de-bias}
Let $X_i$ denote the random variable representing user $i$'s output after running Algorithm~\ref{algo:bounded-noise}. Let $X = \sum_i^n X_i$. We are interested in:
\[
    \mathbb{E}(X) = \sum_{i=1}^n \mathbb{E}(X_i)
\]
Let $p_j$ be the probability that user $i$ outputs $j \in \{0, 1, \ldots, \kappa-1\}$. Let $q_j$ be the true probability of any user having input $j$. Then,
\begin{align*}
    p_j &= \left( 1 - \gamma + \frac{\gamma}{\kappa}\right) q_j + \frac{\gamma}{\kappa} (1 - q_j)\\
    &= (1 - \gamma)q_j + \frac{\gamma}{\kappa}
\end{align*}
Then
\begin{align*}
    \mathbb{E}(X_i) &= \sum_{j = 0}^{\kappa-1} j p_j \\
    &=\sum_{j = 0}^{\kappa-1} j \left( (1 - \gamma)q_j + \frac{\gamma}{\kappa} \right)\\
    &= (1 - \gamma) \left( \sum_{j = 0}^{\kappa-1} j q_j \right) + \frac{\gamma (\kappa-1)}{2} \\
    &= (1 - \gamma) \mu + \frac{\gamma (\kappa-1)}{2}
\end{align*}
where $\mu = \sum_{j = 0}^{\kappa-1} j q_j$ is the true expected input of any user. Thus,
\begin{align*}
    \mathbb{E}(X) &= n \left( (1 - \gamma) \mu + \frac{\gamma (\kappa-1)}{2}\right)\\
    \Rightarrow {n\mu} &= \frac{1}{1 - \gamma}\left( {\mathbb{E}(X)} - \frac{\gamma (\kappa-1)n}{2} \right).
\end{align*}
Therefore, the expected value of the sum output by the LDP algorithm, i.e., $\mathbb{E}(X)$, gives us the expectation of the sum of true inputs, i.e., $n\mu$. Thus, given the sum of these values for a sample, we can estimate the true sum as above.

\section{Lattice Surgery}
\label{app:lattice-surgery}
Lattice surgery is another method to implement quantum error correcting code. We can use it to make a fault-tolerant GHZ state. Just like lattice surgery for qubit system \cite{Horsman_2012, fowler2019low, Litinski2019gameofsurfacecodes} the framework's merging, splitting and other type of `surgeries' have been conceptualized by Cowtan~\cite{cowtan2022qudit}. We need the splitting operation (more precisely, smooth spitting) to make a logical GHZ state with an encoded logical qudit state. For smooth splitting the qudits of an intermediate row parallel to the smooth boundaries  are measured out in the $X$ basis. 
%Here $\ket{\delta_i}$ is the Fourier transformed image of $\ket{i}$ (note for qubit smooth splitting the intermediate qubits are measured in X basis which is the same if we see the $\{\ket{\delta_i}\}$ for $d=2$ as it is described in \cite{Horsman_2012}). 
The results of \cite{Horsman_2012} shows that after smooth spitting in this basis, the new born two surface will represent two different logical qudits with the same logical $\bar{Z}$ operator for both but different logical $\bar{X}$ operators and the two logical qudits will be entangled. More presisely:
\[
\ket{i}_L \rightarrow \ket{i}_L \otimes \ket{i}_L
\]
Hence if we encode a state $\ket{+} = \frac{1}{\sqrt{d}}\sum_{i\in\mathbb{Z}}\ket{i}$, $+1$ eigenstate of logical $\bar{X}$ in the data qudits of the surface patch after one smooth split we will get a logical Bell pair and so on:
\[
\ket{+}_L \rightarrow \frac{1}{\sqrt{d}}\sum_{i\in\mathbb{Z}}\ket{ii} \rightarrow \frac{1}{\sqrt{d}}\sum_{i\in\mathbb{Z}}\ket{iii} \rightarrow \cdots \rightarrow \frac{1}{\sqrt{d}}\sum_{i\in\mathbb{Z}}{\ket{i}}^{\otimes n}
\]
A diagram for this kind of operation is given in Figure~\ref{fig:surg}. 

\begin{figure}
%\begin{center}
\centering
\begin{tikzpicture}
% grid
\draw (0,0.5) grid (8,2.5);
\draw[line width=3pt, line cap=round, dash pattern=on 0pt off 1cm](0.5,0.5) grid (8,2);
\draw[line width=3pt, line cap=round, dash pattern=on 0pt off 1cm](0,2.5) -- (8,2.5);
\draw[line width=3pt, line cap=round] (0, 0.5) -- (0, 0.5);
\draw[line width=3pt, line cap=round] (0, 1.5) -- (0, 1.5);

% measured out qudits

\node[draw=orange!80,circle,minimum size=0.2cm,inner sep=1pt, fill=orange!20] at (2.5,2) {};
\node[draw=orange!80,circle,minimum size=0.2cm,inner sep=1pt, fill=orange!20] at (2.5,1) {};

\node[draw=orange!80,circle,minimum size=0.2cm,inner sep=1pt, fill=orange!20] at (5.5,2) {};
\node[draw=orange!80,circle,minimum size=0.2cm,inner sep=1pt, fill=orange!20] at (5.5,1) {};

\end{tikzpicture}
%    \includegraphics[width=0.5\textwidth]{surgery.drawio.pdf}
%\end{center}

    \caption{Construction of a logical GHZ state with 3 qudits using lattice smooth splitting. Here every dot represents a data qudit and the orange dots represent measured out qudits}
    \label{fig:surg}
\end{figure}
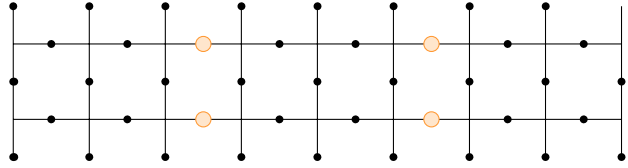

In \cite{Horsman_2012} we see that the threshold limit increases with the degree of the qudit. So it gives us a chance to work with more number of qudits since number of qudits must be less than the dimension. This decoder can work for error correction of finite dimension qudits. For example they have shown it gives good threshold for $d=7919$. Since $n\leq d$ and with increasing $L$ the rate of success is higher for error correction, we can have a high number of $L$ and $d$. Then we can encode the $\ket{+}$ state into that $L\times L$ surface code and make it a fault tolerant logical $\ket{+}$. Then we can do lattice splitting to achieve our logical GHZ state. This GHZ state will be fault tolerant since it is assured by the fault tolerant nature of lattice surgery. Hence this way we can achieve a fault tolerant initial state for our protocol with $L^2$ number of physical qudits. 

We also need to see how to perform teleportation in a fault-tolerant way. We have already sketched fault-tolerant construction of the GHZ state. The server can encode his/her share of the qudit from each Bell pair in the surface code with the state injection method~\cite{Horsman_2012}. So if we can have fault tolerant construction of generalized $CX$ and Hadamard gate, we can perform teleportation demonstrated in Figure~\ref{fig:teleport} with logical qudits and gates. Fortunately we can construct these gates with the help of lattice surgery as shown in \cite{cowtan2022qudit}. To construct fault-tolerant $CX$, we will perform a smooth split on each qudits of the patches of logical GHZ. Then if we perform a rough merge (performing merge operation along the smooth surface of surface codes) with the server owned logical qudit of the respective Bell pair, we will get a logical $CX$. For the fault tolerant $H$ gate we are going to use the antipode operation, where the surface will be rotated such that the vertex and face operations will be exchanged i.e., the previous $X$ and $X^{-1}$ will be now be replaced with $Z$ and $Z^{-1}$ which will give us the logical Hadamard operation or Fourier transform. Thus, we have our logical $CX$, $H$ and $GHZ$, we already have fault=tolerant Bell pairs, we can have the error corrected $\ell$ and $s$ measurements (Figure~\ref{fig:teleport}) and thus the clients will have their non-faulty states via teleportation. 

\end{document}